\newcommand{\be}{\begin{equation}}
\newcommand{\ee}{\end{equation}}
\newcommand{\bea}{\begin{eqnarray}}
\newcommand{\eea}{\end{eqnarray}}
\newcommand{\rr}{\mathrm{r}}
\newcommand{\ri}{\mathrm{i}}
\newcommand{\bN}{\mathbb{N}}
\newcommand{\bR}{\mathbb{R}}
\newcommand{\bC}{\mathbb{C}}
\newcommand{\cB}{\mathcal{B}}
\newcommand{\cD}{\mathcal{D}}
\newcommand{\cP}{\mathcal{P}}
\newcommand{\cO}{\mathcal{O}}
\newcommand{\cV}{\mathcal{V}}
\newcommand{\cA}{\mathcal{A}}
\newcommand{\cM}{\mathcal{M}}
\newcommand{\calR}{\mathcal{R}}
\newcommand{\mfa}{\mathfrak{a}}
\newcommand{\mfc}{\mathfrak{c}}
\newcommand{\mfm}{\mathfrak{m}}
\newcommand{\mfu}{\mathfrak{u}}
\newcommand{\mfg}{\mathfrak{g}}
\newcommand{\mfgl}{\mathfrak{gl}}
\newcommand{\mfk}{\mathfrak{k}}
\newcommand{\mfp}{\mathfrak{p}}
\newcommand{\mfX}{\mathfrak{X}}
\newcommand{\mfL}{\mathfrak{L}}
\newcommand{\bsone}{\boldsymbol{1}}
\newcommand{\bsX}{\boldsymbol{X}}
\newcommand{\bsY}{\boldsymbol{Y}}
\newcommand{\bsC}{\boldsymbol{C}}
\newcommand{\diag}{\text{diag}}
\newcommand{\ad}{\mathrm{ad}}
\newcommand{\wad}{\widetilde{\ad}}
\newcommand{\Ad}{\mathrm{Ad}}
\newcommand{\tr}{\mathrm{tr}}
\newcommand{\Id}{\mathrm{Id}}
\newcommand{\reg}{\mathrm{reg}}
\newcommand{\ran}{\mathrm{ran}}
\newcommand{\red}{\mathrm{red}}
\newcommand{\IM}{\mathrm{Im}}
\newcommand{\RE}{\mathrm{Re}}
\newcommand{\End}{\mathrm{End}}
\newcommand{\ddd}{\mathrm{d}}
\newcommand{\ext}{\mathrm{ext}}
\newcommand{\inner}{\, \lrcorner \,}
\newcommand{\half}{\frac{1}{2}}
\newcommand{\quarter}{\frac{1}{4}}
\renewcommand{\theequation}
{\arabic{section}.\arabic{equation}}
\theoremstyle{plain}
\newtheorem{THEOREM}{Theorem}
\newtheorem{LEMMA}[THEOREM]{Lemma}
\begin{document}
\begin{center}
\Large{\textbf{
On the $r$-matrix structure of the hyperbolic $BC_n$ Sutherland model
}}
\end{center}
\bigskip
\begin{center}
B.G.~Pusztai\\
Bolyai Institute, University of Szeged,\\
Aradi v\'ertan\'uk tere 1, H-6720 Szeged, 
Hungary\\
e-mail: \texttt{gpusztai@math.u-szeged.hu}
\end{center}
\bigskip
\begin{abstract}
Working in a symplectic reduction framework, we construct 
a dynamical $r$-matrix for the classical hyperbolic $BC_n$ Sutherland 
model with \emph{three independent coupling constants}. We also examine 
the Lax representation of the dynamics and its equivalence with the 
Hamiltonian equation of motion.

\bigskip
\noindent
\textbf{Keywords:} \emph{Calogero--Moser--Sutherland models; 
Dynamical $r$-matrices}

\smallskip
\noindent
\textbf{MSC:} 17B80; 37J35; 53D20; 70G65

\smallskip
\noindent
\textbf{PACS:} 02.30.Ik
\end{abstract}
\newpage

\section{Introduction}
\label{S1}
\setcounter{equation}{0}
The Calogero--Moser--Sutherland-type many-particle models are intensively
studied integrable systems with deep connections to many
important branches of mathematics and physics. As a classical Hamiltonian 
system, the hyperbolic $BC_n$ Sutherland model is defined on the cotangent 
bundle of the open subset
\be
	\mfc 
	= \{ q = (q_1, \ldots, q_n) \in \bR^n 
		\, | \, 
		q_1 > \ldots > q_n > 0 \} \subset \bR^n.
\label{mfc}
\ee
For convenience we identify the phase space $T^* \mfc$ with the product 
manifold
\be
	\cP^S = \mfc \times \bR^n 
	= \{ (q, p) \, | \, q \in \mfc, \, p \in \bR^n \},
\label{cP_S}
\ee
endowed with the standard symplectic form
\be
	\omega^S = \sum_{c = 1}^n \ddd q_c \wedge \ddd p_c.
\label{omega_S}
\ee
The dynamics is governed by the interacting many-body Hamiltonian
\be
\begin{split}
	H^S = & \, \half \sum_{c = 1}^n p_c^2
		+ \sum_{c = 1}^n
			\left(
				\frac{g_1^2}{\sinh^2(q_c)} 
				+ \frac{g_2^2}{\sinh^2(2 q_c)} 
			\right)	\\
	& + \sum_{1 \leq a < b \leq n} 
		\left(
			\frac{g^2}{\sinh^2(q_a - q_b)} 
			+ \frac{g^2}{\sinh^2(q_a + q_b)} 
		\right)
\end{split}
\label{H_S}
\ee
with coupling constants $g^2, g_1^2, g_2^2 \in \bR$ satisfying
$g^2 > 0$, $g_2^2 \geq 0$ and $g_1^2 > -\quarter g_2^2$. 

By applying the projection method on the geodesic system of the non-compact
Riemannian symmetric space $SU(n + 1, n) / S(U(n + 1) \times U(n))$, 
Olshanetsky and Perelomov constructed a Lax representation of the $BC_n$ 
Sutherland dynamics and analyzed the issue of solvability as well, but 
only under the restrictive assumption $g_1^2 - 2 g^2 + \sqrt{2} g g_2 = 0$ 
(for details see e.g.
\cite{OlshaPere76}, 
\cite{OlshaPere}, 
\cite{PerelomovBook}).
As the algebraic methods prevailed, the Lax representation of the dynamics
was soon established for arbitrary values of the coupling constants
(see e.g. 
\cite{Inozemtsev_Meshcheryakov_1985},
\cite{Opdam_1988},
\cite{Bordner_Sasaki_Takasaki},
\cite{Bordner_Corrigan_Sasaki}).
Somewhat surprisingly, the symplectic reduction derivation of the $BC_n$ 
Sutherland model with three \emph{independent} coupling constants is only 
a relatively recent development \cite{Feher_Pusztai_2007}. Besides 
providing a nice geometric picture and an efficient solution algorithm, 
the symplectic reduction approach has also allowed us to construct 
action-angle variables to the $BC_n$ Sutherland model and to establish 
its duality with the rational $BC_n$ Ruijsenaars--Schneider--van Diejen 
system (see \cite{Pusztai_NPB2012}). Sticking to the powerful machinery of 
symplectic reduction, in this paper we construct a dynamical $r$-matrix 
for the $BC_n$ Sutherland model. By accomplishing this task we generalize 
the results of Avan, Babelon and Talon on the $r$-matrix structure of the 
hyperbolic $C_n$ Sutherland model, which appeared in their paper 
\cite{Avan_Babelon_Talon_1994}.

The rest of the paper is organized as follows. In order to keep the 
presentation self-contained, in the next section we provide a brief 
account on the group theoretic and symplectic geometric background 
underlying the symplectic reduction derivation of the hyperbolic $BC_n$
Sutherland model. Built upon the reduction approach outlined 
in Section \ref{S2}, in Section \ref{S3} we construct a $q$-dependent
dynamical $r$-matrix for the most general hyperbolic $BC_n$ Sutherland 
model with three independent coupling constants. The new results 
are summarized concisely in Theorems \ref{theorem_r_matrix} and 
\ref{theorem_Lax_eqn}. Subsequently, in Section \ref{S4}, we offer a 
short discussion on possible applications and related open problems.
Finally, some auxiliary material on the Lie algebra $\mfu(n, n)$
can be found in an appendix.

\section{Preliminaries}
\label{S2}
\setcounter{equation}{0}
In this section we review the symplectic reduction 
derivation of the hyperbolic $BC_n$ Sutherland model. 
For convenience, we closely follow the ideas and 
conventions presented in \cite{Pusztai_NPB2012}. 

\subsection{Group theoretic background}
Take an arbitrary positive integer $n \in \bN$, let $N = 2 n$, and consider 
the $N \times N$ matrix
\be
	\bsC = \begin{bmatrix}
	0_n & \bsone_n \\
	\bsone_n & 0_n
	\end{bmatrix}.
\label{bsC}
\ee
The matrix Lie group
\be
	G = \{ y \in GL(N, \bC) \, | \, y^* \bsC y = \bsC \}
\label{G}
\ee
provides an appropriate model of the real reductive Lie group $U(n, n)$. 
Its Lie algebra 
\be
	\mfg = \mfu(u, n) 
	= \{ Y \in \mfgl(N, \bC) \, | \, Y^* \bsC + \bsC Y = 0 \}
\label{mfg}
\ee
comes naturally equipped with the $\Ad$-invariant symmetric bilinear form
\be
	\langle \, , \rangle \colon \mfg \times \mfg \rightarrow \bR,
	\quad
	(Y, \tilde{Y}) \mapsto \langle Y, \tilde{Y} \rangle 
	= \tr(Y \tilde{Y}).
\label{bilinear}
\ee
The fixed-point set of the Cartan involution 
$\Theta(y) = (y^{-1})^*$ $(y \in G)$ can be identified as
\be
	K = \{ y \in G \, | \, \Theta(y) = y \}
	\cong U(n) \times U(n),
\ee 
meanwhile the corresponding Lie algebra involution 
$\theta(Y) = - Y^*$ $(Y \in \mfg)$ naturally induces the Cartan 
decomposition $\mfg = \mfk \oplus \mfp$ with the eigenspaces
\be
	\mfk = \ker(\theta - \Id_\mfg)
	\quad \text{and} \quad
	\mfp = \ker(\theta + \Id_\mfg).
\ee
That is, each $Y \in \mfg$ can be uniquely decomposed as $Y = Y_+ + Y_-$
with $Y_+ \in \mfk$ and $Y_- \in \mfp$. Note that the bilinear form 
(\ref{bilinear}) is negative definite on the subalgebra $\mfk$, whereas 
it is positive definite on the complementary subspace $\mfp$.

Now, with each real $n$-tuple $q = (q_1, \ldots, q_n) \in \bR^n$ we 
associate the $N \times N$ diagonal matrix
\be
	Q = \diag(q_1, \ldots, q_n, -q_1, \ldots, -q_n) \in \mfp.
\label{Q}
\ee
Clearly the subset $\mfa = \{ Q \in \mfp \, | \, q \in \bR^n \}$ is a 
maximal Abelian subspace in $\mfp$, which can be naturally identified
with $\bR^n$. Under the adjoint action, the 
centralizer of $\mfa$ in $K$ is the subgroup
\be
	M = Z_K (\mfa)
	= \{ 
		\diag(e^{\ri \chi_1}, \ldots, e^{\ri \chi_n}, 
			e^{\ri \chi_1}, \ldots, e^{\ri \chi_n}) 
		\, | \, \chi_1, \ldots, \chi_n \in \bR 
	\} \subset K
\label{M}
\ee
with Lie algebra
\be
	\mfm 
	= \{ 
		\diag(\ri \chi_1, \ldots, \ri \chi_n, 
			\ri \chi_1, \ldots, \ri \chi_n)
		\, | \, \chi_1, \ldots, \chi_n \in \bR 
	\} \subset \mfk.
\label{mfm}
\ee
Let $\mfa^\perp$ (respectively $\mfm^\perp$) denote the set of the 
off-diagonal elements of $\mfp$ (respectively $\mfk$); then with
respect to the bilinear form (\ref{bilinear}) we have
the refined \emph{orthogonal} decomposition
\be
	\mfg = \mfm \oplus \mfm^\perp \oplus \mfa \oplus \mfa^\perp.
\label{mfg_refined_decomposition}
\ee
Practically, each Lie algebra element $Y \in \mfg$ can be decomposed as 
\be
	Y = Y_\mfm + Y_{\mfm^\perp} + Y_\mfa + Y_{\mfa^\perp}
\ee 
with unique components belonging to the subspaces indicated by the
subscripts.

Notice that for each $q \in \bR^n$ the operator 
$\ad_Q = [Q, \cdot] \in \mfgl(\mfg)$ leaves the subspace 
$\mfm^\perp \oplus \mfa^\perp$ invariant; therefore the restricted
operator
\be
	\wad_Q = \ad_Q |_{\mfm^\perp \oplus \mfa^\perp} 
	\in \mfgl(\mfm^\perp \oplus \mfa^\perp)
\label{wad_Q}
\ee
is well defined. Recall that the regular part of $\mfa$ consists 
of those diagonal matrices $Q$, for which the linear operator $\wad_Q$ 
is invertible. Clearly the standard Weyl chamber
\be
	\{ Q \in \mfa \, | \, q_1 > \ldots > q_n > 0 \}
\label{Weyl_chamber}
\ee
is a connected component of the regular part of $\mfa$. 
For simplicity, in the rest of the paper we shall identify this Weyl 
chamber with the configuration space $\mfc$ (\ref{mfc}). 

In deriving the Sutherland model from symplectic reduction, the so-called
$K A K$ decomposition of $G$ plays a crucial role. It basically says that
the map
\be
	\bR^n \times K \times K \ni (q, k_L, k_R) 
	\mapsto 
	k_L e^Q k_R^{-1} \in G
\label{KAK}
\ee
is \emph{onto}. Let $G_\reg$ denote the image of $\mfc \times K \times K$
under the above map. As is known, the subset of regular elements, 
$G_\reg$, is an open and dense submanifold of $G$. Moreover, the 
smooth map
\be
	\pi \colon \mfc \times K \times K 
	\twoheadrightarrow G_\reg,
	\quad
	(q, k_L, k_R) \mapsto k_L e^Q k_R^{-1}
	\label{pi}
\ee
is a principal $M$-bundle in a natural manner. Consequently, we arrive 
at the natural identification $G_\reg \cong \mfc \times (K \times K) / M_*$,
where $M_*$ stands for the diagonal embedding of $M$ (\ref{M}) into the 
product Lie group $K \times K$. That is, $M_*$ consists of the pairs 
$(m, m) \in K \times K$ with $m \in M$.

We conclude this subsection with a brief excursion on certain adjoint
orbit of $\mfk$, which is at the heart of the symplectic reduction
derivation of the $BC_n$ Sutherland model. For this, let us consider the 
following set of column vectors
\be
	S = \{ V \in \bC^N \, | \, \bsC V + V = 0, \, V^* V = N \},
\label{S}
\ee
which can be seen as a sphere of real dimension $2 n - 1$. With each
vector $V \in S$ we associate the matrix
\be
	\xi(V) 
	= \ri \mu (V V^* - \bsone_N) + \ri (\mu - \nu) \bsC \in \mfk,
\label{xi_def}
\ee
where $\mu, \nu \in \bR \setminus \{ 0 \}$ are arbitrary non-zero real 
parameters.
Let us now introduce the distinguished column vector $E \in S$ with
components
\be
	E_a = - E_{n + a} = 1
	\quad
	(a \in \bN_n = \{ 1, \ldots, n \}).
\label{E}
\ee
Notice that the adjoint orbit in $\mfk$ passing through the element
$\xi(E)$ has the form
\be
	\cO = \cO(\xi(E)) = \{ \xi(V) \in \mfk \, | \, V \in S \}.
\label{cO}
\ee
More precisely, with the free action 
$U(1) \times S \ni (e^{\ri \psi}, V) \mapsto e^{\ri \psi} V \in S$, 
the map 
\be
	\xi \colon S \twoheadrightarrow \cO,
	\quad
	V \mapsto \xi(V)
\label{xi}
\ee
is a principal $U(1)$-bundle. Therefore the identification 
$\cO \cong S / U(1)$ is immediate.

\subsection{The Sutherland model from symplectic reduction}
We continue with a short review on the symplectic structure of the 
cotangent bundle of $G$. For convenience, we trivialize this bundle 
by left translations. Therefore, making use of the linear isomorphism 
$\mfg^* \cong \mfg$ induced by (\ref{bilinear}), we can think of the 
product manifold $\cP = G \times \mfg$ as an appropriate model of 
$T^* G$. At each point $(y, Y) \in \cP$ the canonical symplectic 
form $\omega \in \Omega^2(\cP)$ can be written as
\be
	\omega_{(y, Y)}(\Delta y \oplus \Delta Y, \delta y \oplus \delta Y)
	= \langle y^{-1} \Delta y, \delta Y \rangle
		- \langle y^{-1} \delta y, \Delta Y \rangle
		+ \langle [y^{-1} \Delta y, y^{-1} \delta y], Y \rangle,
	\label{omega}
\ee
where $\Delta y \oplus \Delta Y$ and $\delta y \oplus \delta Y$ 
are arbitrary elements belonging to the tangent space
$T_y G \oplus \mfg \cong T_{(y, Y)} \cP$. Turning to the adjoint orbit 
$\cO$ (\ref{cO}), remember that it also carries a natural symplectic 
structure induced by the Kirillov--Kostant--Souriau symplectic form
$\omega^\cO \in \Omega^2(\cO)$. Let us keep in mind that at each point 
$\rho \in \cO$ it takes the form
\be
	\omega^\cO_\rho ([X, \rho], [Z, \rho]) 
	= \langle \rho, [X, Z] \rangle,
	\label{omega_cO}
\ee
where $[X, \rho], [Z, \rho] \in T_\rho \cO$ are arbitrary tangent 
vectors with $X, Z \in \mfk$. Now, motivated by the standard 
`shifting trick' of symplectic reduction, we introduce the product 
symplectic manifold
\be
	(\cP^\ext, \omega^\ext) = (\cP \times \cO, \omega + \omega^\cO).
	\label{cP_ext}
\ee
For an arbitrary function $F \in C^\infty(\cP^\ext)$, at each 
$u = (y, Y, \rho) \in \cP^\ext$, we define its gradients
\be
	\nabla^G F(u) \in \mfg,
	\quad
	\nabla^\mfg F(u) \in \mfg,
	\quad
	\nabla^\cO F(u) \in T_\rho \cO \subset \mfk,
	\label{gradient}
\ee
by requiring
\be
	(\ddd F)_u (\delta y \oplus \delta Y \oplus [X, \rho])
	= \langle \nabla^G F(u), y^{-1} \delta y \rangle
		+ \langle \nabla^\mfg F(u), \delta Y \rangle
		+ \langle \nabla^\cO F(u), X \rangle
	\label{gradient_def}
\ee
for all $\delta y \in T_y G$, $\delta Y \in \mfg$ and $X \in \mfk$.
By combining the definition $\bsX_F \inner \omega^\ext = \ddd F$ with 
the above formula, for the Hamiltonian vector field 
$\bsX_F \in \mfX(\cP^\ext)$ we find
\be
	(\bsX_F)_u
	= (y \nabla^\mfg F(u))_y 
		\oplus ([Y, \nabla^\mfg F(u)] - \nabla^G F(u))_Y
		\oplus (-\nabla^\cO F(u))_\rho.
	\label{bsX_u}
\ee
Consequently, from the definition 
$\{ F, H \}^\ext = \omega^\ext (\bsX_F, \bsX_H)$
it is immediate that the Poisson bracket of any pair of functions
$F, H \in C^\infty(\cP^\ext)$ takes the form
\be
\begin{split}
	\{ F, H \}^\ext(u)
	& = \langle \nabla^G F(u), \nabla^\mfg H(u) \rangle
		- \langle \nabla^G H(u), \nabla^\mfg F(u) \rangle \\
    	& \quad - \langle [\nabla^\mfg F(u), \nabla^\mfg H(u)], Y \rangle
		+ \omega^\cO_\rho (\nabla^\cO F(u), \nabla^\cO H(u)). 
\end{split}
\label{PB_ext}
\ee

Inspired by the $K A K$ decomposition of $G$ (\ref{KAK}), let us 
observe that the map
\be
	\Phi^\ext \colon (K \times K) \times \cP^\ext \rightarrow \cP^\ext,
	\quad
	((k_L , k_R), (y, Y, \rho)) 
	\mapsto 
	(k_L y k_R^{-1}, k_R Y k_R^{-1}, k_L \rho k_L^{-1})  
	\label{Phi_ext}
\ee
is a symplectic left action of $K \times K$ on $\cP^\ext$, 
admitting a $K \times K$-equivariant momentum map
\be
	J^\ext \colon \cP^\ext \rightarrow \mfk \oplus \mfk,
	\quad
	(y, Y, \rho)
	\mapsto
	( (y Y y^{-1})_+ + \rho ) \oplus (- Y_+ - \kappa \ri \bsC)
	\label{J_ext}
\ee
for all $\kappa \in \bR$. As is known (see \cite{Feher_Pusztai_2007},
\cite{Pusztai_NPB2012}), the phase space of the hyperbolic $BC_n$ 
Sutherland model can be derived by reducing the extended phase space 
$\cP^\ext$ at the \emph{zero value} of the momentum map $J^\ext$. 
In the following we briefly summarize the main steps 
of the reduction. 

First, let $\mfL_0$ denote the set of those points $u$
of the extended phase space (\ref{cP_ext}), for which we have 
$J^\ext(u) = 0$. Note that the level set $\mfL_0$ turns out to be an 
embedded submanifold of $\cP^\ext$. To analyze its finer structure,
we introduce the Lax matrix
\be
	L \colon \cP^S \rightarrow \mfg,
	\quad
	(q, p) \mapsto L(q, p) = L_\mfp(q, p) - \kappa \ri \bsC,
	\label{L}
\ee
where $L_\mfp(q, p) \in \mfp$ is an Hermitian matrix having the 
block matrix structure
\be
	L_\mfp =
	\begin{bmatrix}
	 	\cA & \cB \\
	 	-\cB & -\cA
	\end{bmatrix}.
\ee
More precisely, the entries of the $n \times n$ matrices $\cA$ and $\cB$ 
are defined by the formulae
\be
	\cA_{a, b} = \frac{-\ri \mu}{\sinh(q_a - q_b)}, 
	\quad
	\cA_{c, c} = p_c, 
	\quad
	\cB_{a, b} = \frac{\ri \mu}{\sinh(q_a + q_b)}, 
	\quad
	\cB_{c, c} 
	= \ri \frac{\nu + \kappa \cosh(2 q_c)}{\sinh(2 q_c)}, 
\label{A_and_B} 
\ee
where $a, b, c \in \bN_n$, $a \neq b$. We also need the manifold
$\cM^S = \cP^S \times (K \times K) / U(1)_*$, where $U(1)_*$ denotes 
the diagonal embedding of $U(1)$ into $K \times K$. Now, one can 
verify that the map
\be
	\Upsilon^S \colon \cM^S \rightarrow \cP^\ext,
	\quad
	(q, p, (\eta_L, \eta_R) U(1)_*)
	\mapsto
	(\eta_L e^Q \eta_R^{-1}, 
	\eta_R L(q, p) \eta_R^{-1}, 
	\eta_L \xi(E) \eta_L^{-1})
	\label{Upsilon_S}
\ee
is an injective immersion with image $\Upsilon^S (\cM^S) = \mfL_0$.
Since the manifolds $\cM^S$ and $\mfL_0$ are of the same dimension, 
this observation leads to the identification $\mfL_0 \cong \cM^S$.

Second, by examining the (residual) action of $K \times K$ on the model 
space $\cM^S$ of the level set $\mfL_0$,  it is immediate that the 
base manifold of the trivial principal $(K \times K)/U(1)_*$-bundle
\be
	\pi^S \colon \cM^S \twoheadrightarrow \cP^S,
	\quad
	(q, p, (\eta_L, \eta_R) U(1)_*)
	\mapsto (q, p)
	\label{pi_S}
\ee
provides a convenient model for the reduced symplectic manifold. That 
is, we end up with the natural identifications
\be
	\cP^\ext /\! /_0 (K \times K)
	\cong \cM^S / (K \times K) 
	\cong \cP^S.
\ee
Making use of the defining relationship
$(\pi^S)^* \omega^\red = (\Upsilon^S)^* \omega^\ext$, for the reduced 
symplectic form we find immediately that $\omega^\red = 2 \omega^S$ with 
the canonical symplectic form $\omega^S$ (\ref{omega_S}). Consequently, 
for the reduced Poisson bracket we obtain
\be
	\{ \cdot \, , \cdot \}^S = 2 \{ \cdot \, , \cdot \}^\red.
	\label{PB_S_vs_PB_red}
\ee

Finally, let us consider the $K \times K$-invariant quadratic Hamiltonian
\be
	F_2 (y, Y, \rho) = \frac{1}{4} \langle Y, Y \rangle 
	= \frac{1}{4} \tr(Y^2)
	\qquad
	((y, Y, \rho) \in \cP^\ext).
\label{F_2}
\ee
It is clear that $F_2$ generates the `free' geodesic motion on the group 
manifold $G$. Due to its invariance, it survives the reduction and the 
corresponding reduced Hamiltonian coincides with the Hamiltonian of the 
Sutherland model (\ref{H_S}) with coupling constants
\be
	g^2 = \mu^2, 
	\quad 
	g_1^2 = \half \nu \kappa, 
	\quad 
	g_2^2 = \half (\nu - \kappa)^2.
	\label{coupling_relations}
\ee
Just now can we really appreciate the inclusion of the innocent
looking $\kappa$-dependent central element $\kappa \ri \bsC$
into the momentum map $J^\ext$ (\ref{J_ext}). Indeed, by specializing 
the parameters $(\mu, \nu, \kappa)$ appropriately, from the proposed 
reduction picture we can recover the most general hyperbolic $BC_n$ 
Sutherland model with \emph{three} independent coupling constants.  

\section{Construction of the $r$-matrix}
\label{S3}
\setcounter{equation}{0}
Since the eigenvalues of $L$ (\ref{L}) are in involution 
(see \cite{Pusztai_NPB2012}), we know from general principles that
the Lax matrix obeys an $r$-matrix Poisson bracket. As is known
(see e.g. \cite{Braden_1997}), there is a general, purely algebraic 
approach to find an explicit formula for the $r$-matrix. However, 
we rather follow the symplectic reduction approach put forward by 
Avan, Babelon and Talon in \cite{Avan_Babelon_Talon_1994}. 
It is worth mentioning that this geometric approach was later 
generalized and systematically exploited in \cite{Braden_et_al_2003}, 
leading to a uniform treatment of the classical $r$-matrix structure 
for various integrable systems.  

\subsection{Local extensions of the Lax matrix}
Take an arbitrary point $(q, p)$ of $\cP^S$ and 
keep it fixed. Notice that the point 
\be
	z_0 = (q, p, (\bsone_N, \bsone_N) U(1)_*) \in \cM^S
	\label{z_0}
\ee
projects onto $(q, p)$, i.e. $\pi^S(z_0) = (q, p)$. Now, pick an arbitrary 
function $f \in C^\infty(\cP^S)$. We say that a smooth function
\be
	\tilde{f} \colon U \rightarrow \bR
\ee 
defined on some open neighborhood $U \subset \cP^\ext$ of point
\be
	u_0 = \Upsilon^S (z_0) = (e^Q, L(q, p), \xi(E)) \in \cP^\ext
	\label{u_0}
\ee
is a \emph{local extension of $f$ around $u_0$}, if
\be
	\tilde{f} \circ \Upsilon^S \big{|}_{(\Upsilon^S)^{-1}(U)}
	= f \circ \pi^S \big{|}_{(\Upsilon^S)^{-1}(U)}.
	\label{local_extension_def}
\ee
As is known, this special class of local extensions can 
be used effectively to compute reduced Poisson brackets by evaluating 
certain `unreduced' Poisson brackets. More precisely, if 
$\tilde{f}, \tilde{g} \in C^\infty(U)$ are arbitrary local 
extensions of functions $f, g \in C^\infty(\cP^S)$ around $u_0$
in the sense of (\ref{local_extension_def}), then 
\be
	\{ f, g \}^\red(q, p) 
	= \{ \tilde{f}, \tilde{g} \}^\ext(u_0).
	\label{red_PB_vs_ext_PB}
\ee
In particular, in the following we shall make use of the above 
formula\footnote{
Consistently with the Dirac bracket, generally there are also some 
correction terms on the right hand side of (\ref{red_PB_vs_ext_PB}). 
However, since we reduce at the \emph{zero value} of the (equivariant) 
momentum map, and since by (\ref{local_extension_def}) our local 
extensions are (locally) $K \times K$-invariant on the level set 
$\mfL_0$, these corrections cancel. For details see e.g. Chapter 14 
in \cite{Babelon_Bernard_Talon_book}.
} 
to find an explicit expression for the $r$-matrix of the Sutherland 
model. The auxiliary functions defined below play an important role in 
constructing local extensions of the Lax matrix $L$ (\ref{L}).

We start with the study of the smooth principal $M$-bundle $\pi$ 
(\ref{pi}) induced by the $K A K$ decomposition of $G$. Since 
$\pi(q, \bsone_N, \bsone_N) = e^Q$, there is a smooth local section
\be
	\check{G} \ni y 
	\mapsto
	(\sigma_\mfc(y), \sigma_L(y), \sigma_R(y)) \in \mfc \times K \times K
	\label{sigma_section}
\ee
of $\pi$, defined on some small open neighborhood 
$\check{G} \subset G_\reg$ of $e^Q$, such that
\be
	(\sigma_\mfc(e^Q), \sigma_L(e^Q), \sigma_R(e^Q))
	= (q, \bsone_N, \bsone_N).
	\label{sigma_normalization}
\ee
Besides the above normalization, we may impose certain conditions
on the derivative of section (\ref{sigma_section}) at $e^Q$, too.
Notice that the tangent space of $\mfc \times K \times K$ at
$(q, \bsone_N, \bsone_N)$ can be identified as
\be
	T_{(q, \bsone_N, \bsone_N)}(\mfc \times K \times K)
	\cong T_q \mfc \oplus T_{\bsone_N} K \oplus T_{\bsone_N} K
	\cong \bR^n \oplus \mfk \oplus \mfk,
\ee
in which the vertical subspace of $\pi$ takes the form
\be
	\ker((\ddd \pi)_{(q, \bsone_N, \bsone_N)})
	= \{ 0 \oplus X \oplus X \in \bR^n \oplus \mfk \oplus \mfk
	\, | \,
	X \in \mfm \} \cong \mfm.
	\label{sigma_vertical_subspace}
\ee
Since $\bR^n \oplus \mfm^\perp \oplus \mfk$ is clearly a complementary
subspace of the vertical subspace, we may assume that at point $e^Q$ 
the derivative of (\ref{sigma_section}) maps into the complementary
`horizontal' subspace, i.e.
\be
	\ran((\ddd(\sigma_\mfc, \sigma_L, \sigma_R))_{e^Q})
	= \bR^n \oplus \mfm^\perp \oplus \mfk.
\ee
That is, we may assume that
\be
	\ran((\ddd \sigma_L)_{e^Q}) = \mfm^\perp.
	\label{sigma_derivative_condition}
\ee
In the following we will need an explicit formula for the 
derivative of (\ref{sigma_section}) at point $e^Q$.

\begin{LEMMA}
\label{lemma_sigma_derivative}
Under assumption (\ref{sigma_derivative_condition}), for each tangent
vector $\delta Y \in \mfg \cong T_{\bsone_N} G$ we have
\begin{align}
	& (\ddd \sigma_L)_{e^Q} (e^Q \delta Y) 
	= -\sinh(\wad_Q)^{-1} (\delta Y)_{\mfa^\perp}, \\
	& (\ddd \sigma_R)_{e^Q} (e^Q \delta Y)
	= -(\delta Y)_+ -\coth(\wad_Q) (\delta Y)_{\mfa^\perp}.
\end{align}
\end{LEMMA}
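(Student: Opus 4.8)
The strategy is to differentiate the defining relation of the section along the fibre direction and project onto the orthogonal summands of the decomposition (\ref{mfg_refined_decomposition}). Concretely, write $(\sigma_\mfc, \sigma_L, \sigma_R)(y) = (\tilde q(y), k_L(y), k_R(y))$, so that by (\ref{pi}) the section satisfies the identity $k_L(y)\, e^{\tilde Q(y)}\, k_R(y)^{-1} = y$ on $\check G$, where $\tilde Q(y) = \diag(\tilde q(y), -\tilde q(y))$. I would differentiate this at $y = e^Q$ in the direction $e^Q \delta Y$ with $\delta Y \in \mfg$, using the normalization (\ref{sigma_normalization}). Denoting the derivatives by $\delta q = (\ddd \sigma_\mfc)_{e^Q}(e^Q \delta Y) \in \bR^n$, $A = (\ddd \sigma_L)_{e^Q}(e^Q \delta Y) \in \mfk$ and $B = (\ddd \sigma_R)_{e^Q}(e^Q \delta Y) \in \mfk$, the product rule gives, after left-multiplying by $e^{-Q}$ and right-multiplying by $e^{Q}$ appropriately, an identity in $\mfg$ of the schematic form
\be
	\delta Y = e^{-\ad_Q}(A) - B + (\text{derivative of } e^{\tilde Q} \text{ term}),
\ee
where the middle term comes from $e^{-Q}(\ddd e^{\tilde Q}) = \delta Q$ (the matrix built from $\delta q$, lying in $\mfa$), because $\ad_Q$ annihilates $\mfa$. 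More carefully, conjugating everything to a common point one arrives at
\be
	\delta Y = \delta Q - B + e^{-\ad_Q}(A),
\ee
with $\delta Q \in \mfa$, $A \in \mfk$, $B \in \mfk$.

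Next I would decompose this identity according to $\mfg = \mfm \oplus \mfm^\perp \oplus \mfa \oplus \mfa^\perp$. The operator $e^{-\ad_Q}$ preserves the invariant splitting $\mfg = (\mfm \oplus \mfa) \oplus (\mfm^\perp \oplus \mfa^\perp)$; it acts as the identity on $\mfm \oplus \mfa$ and as $e^{-\wad_Q}$ on $\mfm^\perp \oplus \mfa^\perp$. Writing $A = A_+ \in \mfk = \mfm \oplus \mfm^\perp$ and similarly for $B$, and using that $e^{-\wad_Q}$ maps $\mfm^\perp$ into $\mfm^\perp \oplus \mfa^\perp$ via the even/odd parts of the exponential: on $\mfm^\perp$, $\cosh(\wad_Q)$ lands in $\mfm^\perp$ while $\sinh(\wad_Q)$ lands in $\mfa^\perp$ (this is the key structural fact about how $\ad_Q$ swaps the off-diagonal parts of $\mfk$ and $\mfp$). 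Taking the $\mfa^\perp$-component of the identity then isolates $-\sinh(\wad_Q)(A)$, giving
\be
	(\delta Y)_{\mfa^\perp} = -\sinh(\wad_Q)\big(A_{\mfm^\perp}\big),
\ee
and since assumption (\ref{sigma_derivative_condition}) forces $A = A_{\mfm^\perp} \in \mfm^\perp$, while $\wad_Q$ is invertible on $\mfc$, I can solve $A = -\sinh(\wad_Q)^{-1}(\delta Y)_{\mfa^\perp}$, which is the first claimed formula. For the second, I would take the $\mfm$- and $\mfm^\perp$-components: the $\mfm$-part gives $(\delta Y)_\mfm = -B_\mfm$, and the $\mfm^\perp$-part gives $(\delta Y)_{\mfm^\perp} = -B_{\mfm^\perp} + \cosh(\wad_Q)(A)$. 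Adding these, $-(B)_+ = (\delta Y)_+ - \cosh(\wad_Q)(A) = (\delta Y)_+ + \cosh(\wad_Q)\sinh(\wad_Q)^{-1}(\delta Y)_{\mfa^\perp}$, so $B = -(\delta Y)_+ - \coth(\wad_Q)(\delta Y)_{\mfa^\perp}$, the second formula.

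The main obstacle, and the step requiring the most care, is the bookkeeping of the exponential map's derivative: the term $e^{-Q}(\ddd e^{\tilde Q})_{e^Q}$ is a priori given by the usual $\frac{1 - e^{-\ad_{\tilde Q}}}{\ad_{\tilde Q}}$ series applied to $\delta \tilde Q$, but since $\delta \tilde Q \in \mfa$ and $\ad_Q$ kills $\mfa$, this collapses to just $\delta Q \in \mfa$ — I need to justify cleanly that no $\mfm \oplus \mfm^\perp \oplus \mfa^\perp$ contribution survives from this term, and that the conjugations I perform to bring the three differentiated factors to a common tangent space are done consistently (one natural choice: left-trivialize at $\bsone_N$, writing the identity as $e^{-Q}\,(\ddd y) = e^{-Q}(\ddd k_L) + \Ad_{k_R}\big(\text{terms}\big)$ and evaluating at the basepoint where $k_L = k_R = \bsone_N$). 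Once the correct identity $\delta Y = \delta Q - B + e^{-\wad_Q}(A)$ (on the off-diagonal part) is in hand, the rest is a routine projection onto the four summands together with the observation that $\ad_Q$ interchanges $\mfm^\perp$ and $\mfa^\perp$, so that $\cosh$ preserves and $\sinh$ swaps them.
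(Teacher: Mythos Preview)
Your proposal is correct and follows essentially the same route as the paper: differentiate the section identity $k_L e^{\tilde Q} k_R^{-1} = y$ at the basepoint to obtain $\delta Y = e^{-\ad_Q}(A) + \delta Q - B$, then project onto the four summands of (\ref{mfg_refined_decomposition}) using that $\cosh(\wad_Q)$ preserves and $\sinh(\wad_Q)$ swaps $\mfm^\perp$ and $\mfa^\perp$. Your worry about the derivative of $e^{\tilde Q}$ is unfounded for exactly the reason you note --- $\delta Q \in \mfa$ commutes with $Q$, so the term really is just $\delta Q$ --- and the paper handles this step without further comment.
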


\begin{proof}
For simplicity, let us introduce the shorthand notations
\be
	\delta_\mfc = (\ddd \sigma_\mfc)_{e^Q} (e^Q \delta Y) \in \bR^n, 
	\quad
 	\delta_L = (\ddd \sigma_L)_{e^Q} (e^Q \delta Y) \in \mfm^\perp, 
 	\quad
 	\delta_R = (\ddd \sigma_R)_{e^Q} (e^Q \delta Y) \in \mfk, 
\ee
and define $D_\mfc = \diag(\delta_\mfc, -\delta_\mfc) \in \mfa$.
Since $\check{G}$ is open, for small values of $\vert t \vert$ we 
have $e^Q e^{t \delta Y} \in \check{G}$, whence
\be
	e^Q e^{t \delta Y} 
	= \sigma_L(e^Q e^{t \delta Y}) 
		e^{\diag(\sigma_\mfc(e^Q e^{t \delta Y}), 
			-\sigma_\mfc(e^Q e^{t \delta Y}))} 
		\sigma_R(e^Q e^{t \delta Y})^{-1}.
\ee
By taking the derivative of the above equation at $t = 0$, we obtain
\be
	\delta Y = \cosh(\wad_Q) \delta_L - \sinh(\wad_Q) \delta_L 
			+ D_\mfc - \delta_R.
\ee
It follows that
$(\delta Y)_\mfa = D_\mfc$ and $(\delta Y)_\mfm = -(\delta_R)_\mfm$,
meanwhile for the off-diagonal components we get 
\be
	(\delta Y)_{\mfa^\perp} = - \sinh(\wad_Q) \delta_L, 
	\quad
	(\delta Y)_{\mfm^\perp} = \cosh(\wad_Q) \delta_L 
					- (\delta_R)_{\mfm^\perp}.
\ee
By solving this linear system for $D_\mfc$, $\delta_L$ and $\delta_R$, 
the lemma follows.
\end{proof}

To proceed further, let us note that
\be
	\check{S} = \{ V \in S \, | \, V_1 \neq 0, \ldots, V_n \neq 0 \}
	\label{S_check}
\ee
is an open and dense submanifold of $S$ (\ref{S}), which contains $E$ 
(\ref{E}). The map
\be
	\tau \colon \check{S} \rightarrow M,
	\quad
	V \mapsto 
	\diag
	\left(
		\frac{V_1}{\vert V_1 \vert}, \ldots, \frac{V_n}{\vert V_n \vert}, 
		\frac{V_1}{\vert V_1 \vert}, \ldots, \frac{V_n}{\vert V_n \vert}
	\right)
	\label{tau}
\ee
defined on $\check{S}$ is smooth, satisfying $\tau(E) = \bsone_N$. 
Utilizing the natural identification 
\be
	T_E S \cong 
	\{ \delta V \in \bC^N 
		\, | \,
		\bsC \delta V + \delta V = 0,
		\, (\delta V)^* E + E^* \delta V = 0 
	\},
	\label{T_E_S}
\ee
for the action of the derivative of $\tau$ on each tangent vector 
$\delta V \in T_E S$ we find 
\be
	(\ddd \tau)_E (\delta V) 
	= \ri \, \diag
	\left( 
		\IM((\delta V)_1), \ldots, \IM((\delta V)_n),
		\IM((\delta V)_1), \ldots, \IM((\delta V)_n)
	\right)
	\in \mfm. 
	\label{tau_derivative}
\ee

Turning to the principal $U(1)$-bundle $\xi$ (\ref{xi}), notice that $E$ 
projects onto $\xi(E)$. Therefore, we can find a smooth local section
\be
	\check{\cO} \ni \rho \mapsto \cV(\rho) \in S
	\label{cV}
\ee
of $\xi$, defined on some open neighborhood $\check{\cO} \subset \cO$
of $\xi(E)$, such that $\cV(\xi(E)) = E$. Moreover, by `shrinking' 
$\check{\cO}$ if necessary, we may assume that
\be
	\cV(\rho) \in \check{S} \quad (\forall \rho \in \check{\cO}).
\ee
In order to fix the range of the derivative of section $\cV$ at point 
$\xi(E)$, notice that the map
\be
	T_E S \times T_E S \ni (\delta V, \delta W) 
	\mapsto
	\RE 
	\left(
		(\delta V)^* \delta W
	\right)
	\in \bR
	\label{inner_product_on_T_E_S}
\ee
is an inner product on the tangent space $T_E S$ (\ref{T_E_S}). 
Therefore, we may assume that the derivative operator 
$(\ddd \cV)_{\xi(E)}$ maps $T_{\xi(E)} \cO$ into the orthogonal 
complement of the vertical subspace
\be
	\ker((\ddd \xi)_E) 
	= \bR \ri E
	= \{ x \ri E \in T_E S \, | \, x \in \bR \}.
	\label{xi_vertical_subspace}
\ee
This requirement amounts to the constraint
\be
	(\delta V)^* E = E^* \delta V
	\quad (\forall \delta V \in \ran((\ddd \cV)_{\xi(E)})).
\label{cV_derivative_normalization}
\ee 

Remembering the local section (\ref{sigma_section}), let us consider 
the smooth map
\be
	\gamma \colon \check{G} \times \check{\cO} \rightarrow \cO,
	\quad
	(y, \rho) \mapsto \sigma_L(y)^{-1} \rho \sigma_L(y).
	\label{gamma}
\ee
Notice that $\gamma (e^Q, \xi(E)) = \xi(E) \in \check{\cO}$.
Therefore, there are some open neighborhoods $\hat{G} \subset \check{G}$ 
of $e^Q$, and $\hat{\cO} \subset \check{\cO}$ of $\xi(E)$, such that
\be
	\gamma (y, \rho) \in \check{\cO}
	\quad
	(\forall (y, \rho) \in \hat{G} \times \hat{\cO}).
\ee
Having equipped with $\tau$, $\cV$ and $\gamma$, now we define their 
composition
\be
	m \colon \hat{G} \times \hat{\cO} \rightarrow M,
	\quad
	(y, \rho) \mapsto m(y, \rho) = \tau(\cV(\gamma(y, \rho))),
	\label{m}
\ee
which is a smooth map satisfying $m(e^Q, \xi(E)) = \bsone_N$.
To compute the derivatives of the matrix entries of the diagonal
matrix
\be
	m(y, \rho) 
	= \diag(m_1(y, \rho), \ldots, m_n(y, \rho),
			m_1(y, \rho), \ldots, m_n(y, \rho))
\label{m_matrix_entries}
\ee 
at point $(e^Q, \xi(E))$, for each $c \in \bN_n$ we introduce the 
column vector $F_c \in \bC^N$ with components 
\be
	(F_c)_a = - (F_c)_{n + a} = \delta_{c, a}
	\quad
	(a \in \bN_n),
	\label{F_c}
\ee
together with the Lie algebra element
\be
	\Xi_c 
	= \ri 
	\left(
		F_c E^* + E F_c^* - \frac{2}{n} E E^*
	\right) \in \mfk.
	\label{Xi_c}
\ee

\begin{LEMMA}
\label{lemma_m_derivative}
Take arbitrary vectors $\delta Y \in \mfg$ and $Z \in \mfk$; 
then for each $c \in \bN_n$ we have
\be
	(\ddd m_c)_{(e^Q, \xi(E))} (e^Q \delta Y \oplus [Z, \xi(E)])
	= -\frac{\ri}{4} 
	\left\langle 
		\Xi_c, Z + \sinh(\wad_Q)^{-1} (\delta Y)_{\mfa^\perp}  
	\right\rangle.
	\label{m_c_derivative}
\ee
\end{LEMMA}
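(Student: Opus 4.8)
The plan is to differentiate the composition $m = \tau \circ \cV \circ \gamma$ at the point $(e^Q, \xi(E))$ by the chain rule, treating the three factors in turn and then extracting the $c$-th diagonal entry by means of formula (\ref{tau_derivative}).

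\emph{Step 1: the derivative of $\gamma$.} Since $\sigma_L(e^Q) = \bsone_N$ by (\ref{sigma_normalization}), differentiating $\gamma(y,\rho) = \sigma_L(y)^{-1}\rho\,\sigma_L(y)$ along a curve through $(e^Q,\xi(E))$ with tangent $e^Q\delta Y \oplus [Z,\xi(E)]$ yields
\[
	(\ddd\gamma)_{(e^Q,\xi(E))}\big(e^Q\delta Y \oplus [Z,\xi(E)]\big)
	= [Z - \delta_L,\, \xi(E)],
	\qquad \delta_L := (\ddd\sigma_L)_{e^Q}(e^Q\delta Y).
\]
By Lemma \ref{lemma_sigma_derivative} we have $\delta_L = -\sinh(\wad_Q)^{-1}(\delta Y)_{\mfa^\perp}$, so the Lie algebra element $W := Z - \delta_L = Z + \sinh(\wad_Q)^{-1}(\delta Y)_{\mfa^\perp} \in \mfk$ is exactly the one appearing on the right-hand side of (\ref{m_c_derivative}); it therefore remains to prove that the $c$-th diagonal entry of $(\ddd(\tau\circ\cV))_{\xi(E)}\big([W,\xi(E)]\big)$ equals $-\frac{\ri}{4}\langle\Xi_c, W\rangle$.

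\emph{Step 2: the derivative of $\cV$, read off componentwise.} Put $\delta V := (\ddd\cV)_{\xi(E)}\big([W,\xi(E)]\big) \in T_E S$. Differentiating the section identity $\xi\circ\cV = \Id$ shows that $\delta V$ solves $(\ddd\xi)_E(\delta V) = [W,\xi(E)]$, that is,
\[
	\ri\mu\big(\delta V\, E^* + E\,(\delta V)^*\big) = \ri\mu\,[W, E E^*] + \ri(\mu-\nu)\,[W, \bsC],
\]
while the normalization (\ref{cV_derivative_normalization}), together with $\delta V \in T_E S$, forces $(\delta V)^* E = 0$. Instead of inverting $(\ddd\xi)_E$ in full, I contract this identity with $F_c^*$ on the left and with $E$ on the right. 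Using $E^* E = N = 2n$, $F_c^* E = 2$, $F_c^*\bsC = -F_c^*$ and $\bsC E = -E$, the term carrying $(\delta V)^* E$ is killed by the normalization, the central commutator contributes $F_c^*[W,\bsC]E = -F_c^* W E + F_c^* W E = 0$, and what survives is
\[
	4n\,\ri\mu\,(\delta V)_c = \ri\mu\big(2n\,F_c^* W E - 2\,E^* W E\big),
	\qquad\text{hence}\qquad
	(\delta V)_c = \frac{1}{2}\, F_c^* W E - \frac{1}{2n}\, E^* W E,
\]
where I also used that $\bsC\delta V = -\delta V$ implies $F_c^*\delta V = 2(\delta V)_c$.

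\emph{Step 3: the derivative of $\tau$ and the final comparison.} By (\ref{tau_derivative}) the $c$-th diagonal entry of $(\ddd\tau)_E(\delta V)$ is $\ri\,\IM\big((\delta V)_c\big)$. On the other hand, since $W^* = -W$, one has $E^* W F_c = -\overline{F_c^* W E}$ and $E^* W E \in \ri\bR$, so expanding $\langle\Xi_c, W\rangle = \tr(\Xi_c W)$ with $\Xi_c = \ri\big(F_c E^* + E F_c^* - \frac{2}{n} E E^*\big)$ gives $\langle\Xi_c, W\rangle = -2\,\IM\big(F_c^* W E\big) + \frac{2}{n}\,\IM\big(E^* W E\big)$. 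Comparing with the formula for $(\delta V)_c$ from Step 2 yields $\IM\big((\delta V)_c\big) = -\frac{1}{4}\langle\Xi_c, W\rangle$, and therefore
\[
	(\ddd m_c)_{(e^Q,\xi(E))}\big(e^Q\delta Y \oplus [Z,\xi(E)]\big)
	= \ri\,\IM\big((\delta V)_c\big) = -\frac{\ri}{4}\,\langle\Xi_c, W\rangle,
\]
which is the assertion. The main obstacle is Step 2: pinning down the scalar $(\delta V)_c$ without first producing a closed formula for the whole section derivative $(\ddd\cV)_{\xi(E)}$, which would require inverting $(\ddd\xi)_E$ on $T_{\xi(E)}\cO$ and tracking the normalization carefully. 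What makes the shortcut go through is that the normalization $(\delta V)^* E = 0$ dictated by (\ref{cV_derivative_normalization}) is precisely what annihilates the unwanted $E(\delta V)^*$ contribution after contracting with $F_c^*$ and $E$, and that the elementary identities $\bsC E = -E$, $F_c^*\bsC = -F_c^*$ cause the $(\mu-\nu)$-term involving $\bsC$ to disappear from the relevant matrix element; the remaining expansion of $\tr(\Xi_c W)$ and the accounting of real and imaginary parts in Step 3 are routine.
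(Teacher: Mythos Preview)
Your proof is correct and follows essentially the same three-step chain-rule strategy as the paper: compute $(\ddd\gamma)$ via Lemma~\ref{lemma_sigma_derivative}, determine the relevant component of $\delta V = (\ddd\cV)_{\xi(E)}[W,\xi(E)]$, and then apply formula~(\ref{tau_derivative}). The only difference is in the mechanics of Step~2: the paper observes directly that $WE \in T_E S$ is a preimage of $[W,\xi(E)]$ under $(\ddd\xi)_E$, so that $\delta V = WE + x\ri E$ for some $x\in\bR$ fixed by the normalization, giving the closed formula $\delta V = WE - \frac{E^*WE}{N}E$; you instead contract the equation $(\ddd\xi)_E\delta V = [W,\xi(E)]$ with $F_c^*$ on the left and $E$ on the right to read off $(\delta V)_c$ without writing $\delta V$ in full --- both routes yield $(\delta V)_c = (WE)_c - \frac{1}{2n}E^*WE$ and the remaining bookkeeping is identical.
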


\begin{proof}
From (\ref{m}) it is clear that
\be
	(\ddd m)_{(e^Q, \xi(E))} (e^Q \delta Y \oplus [Z, \xi(E)])
	= (\ddd \tau)_E 
		(\ddd \cV)_{\xi(E)} 
			(\ddd \gamma)_{(e^Q, \xi(E))} 
				(e^Q \delta Y \oplus [Z, \xi(E)]).
	\label{m_derivative_chain_rule}
\ee 
Upon introducing the shorthand notation
\be
	X = Z + \sinh(\wad_Q)^{-1} (\delta Y)_{\mfa^\perp} \in \mfk,
	\label{X_def}
\ee
from (\ref{gamma}) and Lemma \ref{lemma_sigma_derivative} it is immediate
that
\be
	(\ddd \gamma)_{(e^Q, \xi(E))} (e^Q \delta Y \oplus [Z, \xi(E)])	
	= [X, \xi(E)] \in T_{\xi(E)} \cO.
\ee
Next, let us consider the tangent vector
\be
	\delta V = (\ddd \cV)_{\xi(E)} 
				(\ddd \gamma)_{(e^Q, \xi(E))} 
					(e^Q \delta Y \oplus [Z, \xi(E)])
	\in T_E S.
	\label{m_delta_V}
\ee
Since $\cV$ (\ref{cV}) is a local section, we have 
$\xi \circ \cV = \Id_{\check{\cO}}$, which entails 
\be
	(\ddd \xi)_E \delta V
	= (\ddd \xi)_E (\ddd \cV)_{\xi(E)} [X, \xi(E)]
	= [X, \xi(E)].
\ee
On the other hand, notice that the vector $X E$ belongs to the tangent 
space $T_E S$ (\ref{T_E_S}). Moreover, we find easily that
\be
	(\ddd \xi)_E (X E) = [X, \xi(E)].
\ee
From the last two equations we conclude that $\delta V - X E$ belongs to 
the vertical subspace (\ref{xi_vertical_subspace}), whence
$\delta V = X E + x \ri E$ with some $x \in \bR$. However, the value 
of $x$ is uniquely determined by (\ref{cV_derivative_normalization}), 
from where we infer that
\be
	\delta V = X E - \frac{E^* X E}{N } E.
\ee
Now, by combining (\ref{m_derivative_chain_rule}) and (\ref{m_delta_V}), 
we see that
\be
	(\ddd m)_{(e^Q, \xi(E))} (e^Q \delta Y \oplus [Z, \xi(E)]) 
	= (\ddd \tau)_E \delta V. 
\ee
Therefore, recalling (\ref{tau_derivative}) and (\ref{m_matrix_entries}), 
for each $c \in \bN_n$ we can write
\be
	(\ddd m_c)_{(e^Q, \xi(E))} (e^Q \delta Y \oplus [Z, \xi(E)]) 
	= \ri \, \IM( (\delta V)_c)
	= \ri \, \IM((X E)_c) - \frac{\tr(X E E^*)}{N}.
	\label{ddd_m_c}
\ee
Utilizing (\ref{F_c}), notice that
\be
	\ri \, \IM((X E)_c)
	= \half \ri \, \IM(F_c^* X E)
	= \quarter (F_c^* X E + E^* X F_c)
	= \frac{1}{4 \ri} \tr(\ri(F_c E^* + E F_c^*) X).
	\label{im_X_E_c}
\ee
Plugging this formula into (\ref{ddd_m_c}), the lemma follows.
\end{proof}

Now we are in a position to construct an appropriate local extension of
the Lax operator $L$ (\ref{L}). For this, let us consider the open subset
\be
	\hat{\cP}^\ext = \hat{G} \times \mfg \times \hat{\cO} \subset \cP^\ext,
	\label{hat_cP_ext}
\ee
which clearly contains the reference point $u_0$ 
(\ref{u_0}). Recalling (\ref{sigma_section}) and (\ref{m}), on 
$\hat{\cP}^\ext$ we define the smooth function
\be
	\varphi \colon \hat{\cP}^\ext \rightarrow K,
	\quad
	(y, Y, \rho) \mapsto \varphi(y, Y, \rho) = \sigma_R (y) m(y, \rho).
	\label{varphi}
\ee
Notice that $\varphi(u_0) = \bsone_N$. Finally, let us consider 
the locally defined smooth function
\be
	\tilde{L} \colon \hat{\cP}^\ext \rightarrow \mfg,
	\quad
	(y, Y, \rho) 
	\mapsto 
	\tilde{L}(y, Y, \rho) 
	= \varphi(y, Y, \rho)^{-1} Y \varphi(y, Y, \rho).
	\label{tilde_L}
\ee

\begin{LEMMA}
\label{L_local_extension}
The $\mfg$-valued function $\tilde{L}$ (\ref{tilde_L}), defined on a small 
open neighborhood of the reference point $u_0$ (\ref{u_0}), is a 
local extension of $L$ (\ref{L}). More precisely, we have
\be
	\tilde{L} \circ \Upsilon^S \big{|}_{(\Upsilon^S)^{-1}(\hat{\cP}^\ext)}
	= L \circ \pi^S \big{|}_{(\Upsilon^S)^{-1}(\hat{\cP}^\ext)}.
	\label{tilde_L_and_L}
\ee 
\end{LEMMA}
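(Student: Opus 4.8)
The idea is to unwind the definition (\ref{tilde_L}) of $\tilde L$ at a point lying over an arbitrary $(q,p) \in \cP^S$ and to check that the non-canonical data entering the construction --- the $M$-bundle ambiguity in the section $\sigma$ (\ref{sigma_section}) and the $U(1)$-bundle ambiguity in the section $\cV$ (\ref{cV}) --- cancel, leaving precisely $L(q,p)$. So I would fix a point $z = (q, p, (\eta_L, \eta_R) U(1)_*) \in (\Upsilon^S)^{-1}(\hat{\cP}^\ext)$ and put $u := \Upsilon^S (z) = (y, Y, \rho)$, so that $y = \eta_L e^Q \eta_R^{-1} \in \hat{G}$, $Y = \eta_R L(q,p) \eta_R^{-1}$ and $\rho = \eta_L \xi(E) \eta_L^{-1} \in \hat{\cO}$; by (\ref{hat_cP_ext}) together with the nesting $\hat{G} \subset \check{G}$, $\hat{\cO} \subset \check{\cO}$ (arranged so that $\gamma(\hat{G} \times \hat{\cO}) \subset \check{\cO}$), every map occurring below is evaluated inside its domain. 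Since $\tilde L(u) = \varphi(u)^{-1} Y \varphi(u)$ with $Y = \eta_R L(q,p) \eta_R^{-1}$, the whole statement reduces to showing that $\varphi(u)$ coincides with $\eta_R$ up to a scalar in $U(1)$: being central, such a scalar drops out of the conjugation and yields $\tilde L(u) = L(q,p) = L(\pi^S(z))$, which is (\ref{tilde_L_and_L}).

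The first step is to locate the section $\sigma$ along $y$. Both triples $(q, \eta_L, \eta_R)$ and $(\sigma_\mfc(y), \sigma_L(y), \sigma_R(y))$ lie in the fibre $\pi^{-1}(y)$ of the principal $M$-bundle $\pi$ (\ref{pi}); since this fibre is a single orbit of the free $M$-action $(q', k_L, k_R) \cdot m = (q', k_L m, k_R m)$ (which is well defined because the elements of $M$ commute with $e^{Q'}$), there is a unique $m_0 = m_0(z) \in M$ with $\sigma_\mfc(y) = q$, $\sigma_L(y) = \eta_L m_0$ and $\sigma_R(y) = \eta_R m_0$. Feeding this into the definition (\ref{gamma}) of $\gamma$, and using that every element of $M$ commutes with $\bsC$ while any unitary matrix $m$ satisfies $m (V V^*) m^{-1} = (m V)(m V)^*$, one obtains
\[
	\gamma(y, \rho) = \sigma_L(y)^{-1} \rho \, \sigma_L(y) = m_0^{-1} \xi(E) m_0 = \xi(m_0^{-1} E),
\]
with $m_0^{-1} E \in \check{S}$.

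The second step unwinds $m(y, \rho) = \tau(\cV(\gamma(y, \rho)))$. Because $\cV$ (\ref{cV}) is a local section of the principal $U(1)$-bundle $\xi$ (\ref{xi}), the point $\cV(\xi(m_0^{-1} E))$ lies on the $U(1)$-orbit of $m_0^{-1} E$, hence equals $e^{\ri \psi_0} m_0^{-1} E$ for some $\psi_0 = \psi_0(z) \in \bR$; applying $\tau$ (\ref{tau}), using $\vert e^{\ri \psi_0} \vert = 1$ and the explicit diagonal form (\ref{M}) of the elements of $M$, a one-line computation gives $m(y, \rho) = e^{\ri \psi_0} m_0^{-1}$, where $e^{\ri \psi_0}$ is shorthand for $e^{\ri \psi_0} \bsone_N \in M$. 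Hence
\[
	\varphi(u) = \sigma_R(y) \, m(y, \rho) = \eta_R m_0 \cdot e^{\ri \psi_0} m_0^{-1} = e^{\ri \psi_0} \eta_R,
\]
the central scalar passing freely through $m_0^{-1}$. Substituting this into $\tilde L(u) = \varphi(u)^{-1} Y \varphi(u)$ makes the scalars $e^{\pm \ri \psi_0}$ cancel, and $\tilde L(u) = \eta_R^{-1} (\eta_R L(q,p) \eta_R^{-1}) \eta_R = L(q,p)$; since $z$ was an arbitrary point of $(\Upsilon^S)^{-1}(\hat{\cP}^\ext)$, this establishes (\ref{tilde_L_and_L}).

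Each individual computation above is short. The point that genuinely requires care --- and the conceptual content of the lemma --- is the cancellation of the two undetermined quantities $m_0 \in M$ and $e^{\ri \psi_0} \in U(1)$: these depend non-canonically on $z$ and on the chosen local sections, yet $m_0$ cancels between $\sigma_R(y)$ and $m(y, \rho)$ precisely because the \emph{same} section $\sigma_L$ also feeds into $\gamma$, while $e^{\ri \psi_0}$ survives only as a central scalar and is therefore annihilated by the conjugation in (\ref{tilde_L}). Verifying that $\gamma(y, \rho)$, $\cV(\gamma(y, \rho))$, and the subsequent arguments all remain inside the domains of the maps applied to them is routine once $u \in \hat{\cP}^\ext$ is assumed, and is exactly what the shrinking of $\hat{G}$ inside $\check{G}$ and of $\hat{\cO}$ inside $\check{\cO}$ was designed to ensure.
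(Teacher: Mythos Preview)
Your proof is correct and follows essentially the same route as the paper's: pick a generic point in $(\Upsilon^S)^{-1}(\hat{\cP}^\ext)$, use the $M$-bundle structure of $\pi$ to relate $(\sigma_L,\sigma_R)$ to $(\eta_L,\eta_R)$ by a diagonal element of $M$, compute $\gamma$, pass through $\cV$ to pick up a $U(1)$-phase, apply $\tau$, and observe that $\varphi$ equals $\eta_R$ up to a central scalar. Your $m_0$ is the paper's $\cD^{-1}$; apart from this relabeling and your added explanatory paragraph on the cancellation mechanism, the arguments coincide. One cosmetic remark: since the ambient section fixes a reference point also called $(q,p)$, you may wish to use different letters (e.g.\ $(\tilde q,\tilde p)$) for the generic point to avoid a clash with the surrounding text.
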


\begin{proof}
Take an arbitrary point
$\tilde{z} 
= (\tilde{q}, \tilde{p}, (\tilde{\eta}_L, \tilde{\eta}_R) U(1)_*)
\in (\Upsilon^S)^{-1}(\hat{\cP}^\ext)$, and let
\be
	(\tilde{y}, \tilde{Y}, \tilde{\rho})
	= \Upsilon^S(\tilde{z})
	= (\tilde{\eta}_L e^{\tilde{Q}} \tilde{\eta}_R^{-1},
		\tilde{\eta}_R L(\tilde{q}, \tilde{p}) \tilde{\eta}_R^{-1},
		\tilde{\eta}_L \xi(E) \tilde{\eta}_L^{-1})
	\in \hat{\cP}^\ext.
	\label{tilde_y_Y_rho}
\ee
By applying the local section $\sigma$ (\ref{sigma_section}) on the Lie 
group element $\tilde{y} \in \hat{G} \subset \check{G}$, we see that
\be
	\tilde{y} 
	= \sigma_L (\tilde{y})
		e^{\diag(\sigma_\mfc(\tilde{y}), -\sigma_\mfc(\tilde{y}))}
		\sigma_R(\tilde{y})^{-1}.
\ee
Recalling $\pi$ (\ref{pi}), it is immediate that
$\tilde{q} = \sigma_\mfc(\tilde{y})$ and
$(\tilde{\eta}_L, \tilde{\eta}_R) M_* 
= (\sigma_L(\tilde{y}), \sigma_R(\tilde{y})) M_*$.
Therefore, we have
\be
	\tilde{\eta}_L = \sigma_L(\tilde{y}) \cD
	\quad \text{and} \quad
	\tilde{\eta}_R = \sigma_R(\tilde{y}) \cD
	\label{eta_and_sigma}
\ee
with some diagonal matrix $\cD \in M$.

Next, we inspect the Lie group element
$m(\tilde{y}, \tilde{\rho}) 
= \tau(\cV(\gamma(\tilde{y}, \tilde{\rho}))) \in M$.
Remembering (\ref{gamma}), notice that
\be
	\gamma(\tilde{y}, \tilde{\rho})
	= \sigma_L(\tilde{y})^{-1} \tilde{\rho} \sigma_L(\tilde{y})
	= \sigma_L(\tilde{y})^{-1} \tilde{\eta}_L 
		\xi(E) \tilde{\eta}_L^{-1} \sigma_L(\tilde{y})
	= \cD \xi(E) \cD^{-1} 
	= \xi(\cD E).
\ee 
Since $\cV$ (\ref{cV}) is a local section of $\xi$ (\ref{xi}), we find 
that
\be
	\xi(\cV(\gamma(\tilde{y}, \tilde{\rho})))
	= (\xi \circ \cV)(\xi(\cD E)) = \xi(\cD E).
\ee
Therefore, there is some $\psi \in \bR$, such that 
$\cV(\gamma(\tilde{y}, \tilde{\rho})) = e^{\ri \psi} \cD E$.
Recalling (\ref{tau}), it follows that
\be
	m(\tilde{y}, \tilde{\rho}) 
	= \tau(e^{\ri \psi} \cD E)
	= e^{\ri \psi} \cD.
	\label{m_tilde}
\ee
Due to relationships (\ref{eta_and_sigma}) and 
(\ref{m_tilde}) we observe that
\be
	\varphi(\tilde{y}, \tilde{Y}, \tilde{\rho})
	= \sigma_R (\tilde{y}) m(\tilde{y}, \tilde{\rho}) 
	=  e^{\ri \psi} \tilde{\eta}_R.
\ee
Therefore, recalling (\ref{tilde_L}) and (\ref{tilde_y_Y_rho}), we find 
immediately that
\be
	\tilde{L}(\Upsilon^S(\tilde{z}))
	= \varphi(\tilde{y}, \tilde{Y}, \tilde{\rho})^{-1} 
		\tilde{Y} \varphi(\tilde{y}, \tilde{Y}, \tilde{\rho})
	= L(\tilde{q}, \tilde{p})
	= L (\pi^S(\tilde{z})).
\ee
Since $\tilde{z}$ is an arbitrary element of 
$(\Upsilon^S)^{-1}(\hat{\cP}^\ext)$, the lemma follows.
\end{proof}

\subsection{Computing the $r$-matrix}
Let us choose some dual bases $\{ T_A \}$, $\{ T^A \}$ in $\mfg$, i.e. 
$\langle T^A, T_B \rangle = \delta^A_B$, and consider 
the function
\be
	\cP^\ext \ni (y, Y, \rho) \mapsto \bsY(y, Y, \rho) = Y \in \mfg,
	\label{bsY}
\ee
together with its components
\be
	\cP^\ext \ni (y, Y, \rho) 
	\mapsto 
	\bsY^A (y, Y, \rho) = \langle T^A, Y \rangle \in \bR,
	\label{bsY_A}
\ee 
defined on the extended phase space. Notice that
$\bsY = \sum_A \bsY^A T_A$.
Since $\bsY^A$ depends only on variable $Y$, its only nontrivial gradient 
(\ref{gradient}) is 
\be
	\nabla^\mfg \bsY^A \equiv T^A, 
	\label{nabla_bsY_A}
\ee
whence from (\ref{PB_ext}) we obtain that
$\{ \bsY^A, \bsY^B \}^\ext = - \langle [ T^A, T^B ], \bsY \rangle$.
As usual in the theory of integrable systems, these relationships can be 
succinctly rewritten in the standard St Petersburg tensorial\footnote{ 
We write 
$L_1 = L \otimes \bsone$, 
$L_2 = \bsone \otimes L$,
together with  
$r_{12} = \sum r^{A, B} T_A \otimes T_B$,
$r_{21} = \sum r^{A, B} T_B \otimes T_A$, etc. 
} 
notation. Indeed, upon introducing the 
quadratic Casimir
\be
	\Omega_{12} = \Omega_{21}
	= \sum_A T_A \otimes T^A \in \mfg \otimes \mfg,
	\label{Casimir}
\ee
we find easily that
\be
	\{ \bsY_1, \bsY_2 \}^\ext 
	= \sum_{A, B} \{ \bsY^A, \bsY^B \}^\ext T_A \otimes T_B  	
	= [ - \Omega_{12} / 2, \bsY_1 ] 
		- [ -\Omega_{21}/2, \bsY_2 ].
	\label{bsY_tensorial_PB}
\ee
Now, from (\ref{tilde_L}) and (\ref{bsY}) we see that $\tilde{L}$
can be obtained from $\bsY$ by the gauge transformation
\be
	\tilde{L} = \varphi^{-1} \bsY \varphi.
	\label{tilde_L_vs_bsY}
\ee
It readily follows\footnote{
For details on gauge transformations see e.g. \cite{Babelon_Viallet_1990}, 
or Chapter 2 in \cite{Babelon_Bernard_Talon_book}.
} 
that
$\{ \tilde{L}_1, \tilde{L}_2 \}^\ext
= [ \tilde{r}_{12}, \tilde{L}_1 ]
- [ \tilde{r}_{21}, \tilde{L}_2 ]$ with the transformed $r$-matrix
\be
	\tilde{r}_{12}
	= \varphi_1^{-1} \varphi_2^{-1}
	\left(
		-\half \Omega_{12}
		- \{ \varphi_1, \bsY_2 \}^\ext \varphi_1^{-1}
		+ \half 
			\left[ 
				\{ \varphi_1, \varphi_2 \}^\ext \varphi_1^{-1} \varphi_2^{-1},
				 \bsY_2
			\right]
	\right)
	\varphi_1 \varphi_2.
	\label{tilde_r}
\ee
Recalling (\ref{PB_S_vs_PB_red}) and (\ref{red_PB_vs_ext_PB}), it is thus
immediate that for the Lax matrix $L$ (\ref{L}) we have
\be
	\{ L_1, L_2 \}^S(q, p)
	= [ r_{12}(q, p), L_1(q, p) ]
		- [ r_{21}(q, p), L_2(q, p) ]
\label{L_tensorial_PB}
\ee
with the $r$-matrix
\be
	r_{12}(q, p) 
	= 2 \tilde{r}_{12}(u_0) 
	= - \Omega_{12}
		- 2 \{ \varphi_1, \bsY_2 \}^\ext (u_0)  
		+ 
		\left[ 
			\{ \varphi_1, \varphi_2 \}^\ext (u_0),
				L_2(q, p)
		\right].
\label{r}
\ee
In the following we use extensively the special basis of $\mfg$
introduced in the appendix. As a first step, we define the Lie algebra 
elements
\be
	Z_{e_a \pm e_b} = \frac{D^+_a + D^+_b}{\sqrt{2}} \in \mfm
	\quad \text{and} \quad
	Z_{2 e_c} = D^+_c \in \mfm,
\label{Z_alpha}
\ee
where $a, b, c \in \bN_n$ and $a < b$. Observe that these matrices
are labeled by the $C_n$-type positive roots (\ref{calR_+}). Now, we can 
formulate the main result of the paper. 

\begin{THEOREM}
\label{theorem_r_matrix}
The Lax matrix (\ref{L}) of the classical hyperbolic $BC_n$ Sutherland 
model verifies the $r$-matrix Poisson bracket (\ref{L_tensorial_PB}) 
with the $q$-dependent $r$-matrix
\be
\begin{split}
	r_{12}(q) 
	= & \, 2 \sum_{\alpha, \epsilon} 
			\coth(\alpha(q)) X^{+, \epsilon}_\alpha 
				\otimes X^{-, \epsilon}_\alpha
	-2 \sum_\alpha \frac{1}{\sinh(\alpha(q))} 
			Z_\alpha \otimes X^{-, \ri}_\alpha \\
	& - \sum_c (D^+_c \otimes D^+_c + D^-_c \otimes D^-_c)
		- \sum_{\alpha, \epsilon} 
			(X^{+, \epsilon}_\alpha \otimes X^{+, \epsilon}_\alpha
			+X^{-, \epsilon}_\alpha \otimes X^{-, \epsilon}_\alpha).
\end{split}
\label{r_matrix}
\ee
\end{THEOREM}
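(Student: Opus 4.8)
The plan is to evaluate the master formula (\ref{r}) at the reference point $u_0$ (\ref{u_0}) by computing each of the three ingredients separately: the Casimir term $-\Omega_{12}$, the cross term $-2\{\varphi_1,\bsY_2\}^\ext(u_0)$, and the commutator term $[\{\varphi_1,\varphi_2\}^\ext(u_0),L_2(q,p)]$. Recall $\varphi = \sigma_R(y)\,m(y,\rho)$, so the needed Poisson brackets reduce — via (\ref{PB_ext}) and the gradient definition (\ref{gradient_def}) — to the derivatives of $\sigma_R$ and of the entries $m_c$, which are precisely the content of Lemmas \ref{lemma_sigma_derivative} and \ref{lemma_m_derivative}. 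Since at $u_0$ we have $\varphi(u_0)=\bsone_N$, all the conjugations $\varphi_1,\varphi_2$ drop out and (\ref{r}) simplifies considerably before any basis expansion is attempted.

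First I would compute $\{\varphi, \bsY^A\}^\ext(u_0)$. Because $\bsY^A$ has only the gradient $\nabla^\mfg\bsY^A = T^A$ (\ref{nabla_bsY_A}), formula (\ref{bsX_u}) shows that the Hamiltonian vector field of $\bsY^A$ acts on the $G$-factor as left translation by $T^A$ and on the $\cO$-factor by $-\nabla^\cO\bsY^A = 0$; hence $\{\varphi,\bsY^A\}^\ext(u_0)$ is the directional derivative of $\varphi$ at $u_0$ in the direction $(e^Q T^A)\oplus 0 \oplus 0$. Using $\varphi=\sigma_R\cdot m$ and $\varphi(u_0)=\bsone_N$, this equals $(\ddd\sigma_R)_{e^Q}(e^Q T^A) + (\ddd m)_{(e^Q,\xi(E))}(e^Q T^A\oplus 0)$. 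The first piece is given by Lemma \ref{lemma_sigma_derivative} (with $Z=0$ in the $m$-part), the second by Lemma \ref{lemma_m_derivative}; assembling $\sum_A \{\varphi,\bsY^A\}^\ext(u_0)\,T^A$ into tensor notation against $\bsY_2$ produces the $\coth$ and $1/\sinh$ contributions. Second I would compute $\{\varphi,\varphi\}^\ext(u_0)$ in the same way: here both gradients $\nabla^G\varphi$ and $\nabla^\cO\varphi$ enter, the latter now nonzero because of the $\rho$-dependence of $m$ through $\gamma$, and again Lemmas \ref{lemma_sigma_derivative} and \ref{lemma_m_derivative} supply every derivative needed; the term $\omega^\cO_\rho(\nabla^\cO\varphi_1,\nabla^\cO\varphi_2)$ in (\ref{PB_ext}) must be evaluated at $\rho=\xi(E)$ using (\ref{omega_cO}). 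Then $[\{\varphi_1,\varphi_2\}^\ext(u_0),L_2(q,p)]$ is a finite matrix commutator that I would expand in the appendix basis.

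The decisive bookkeeping step is the passage to the root-indexed basis $\{D^\pm_c, X^{\pm,\epsilon}_\alpha\}$ of the appendix and the matrices $Z_\alpha$ (\ref{Z_alpha}): I expect the three raw contributions, each initially a sum over the dual basis $\{T_A\}\otimes\{T^A\}$, to reorganize so that the $\mfa$-diagonal part yields $-\sum_c(D^+_c\otimes D^+_c + D^-_c\otimes D^-_c)$, the off-diagonal $\mfm^\perp\oplus\mfa^\perp$ part yields the $\coth(\alpha(q))$ sum together with the "leftover" $-\sum_{\alpha,\epsilon}(X^{+,\epsilon}_\alpha\otimes X^{+,\epsilon}_\alpha + X^{-,\epsilon}_\alpha\otimes X^{-,\epsilon}_\alpha)$ coming from the pieces of $-\Omega_{12}$ and of $-2\{\varphi_1,\bsY_2\}^\ext$ that do not combine into $\coth$, and the $\mfm$ part — sourced by $(\ddd m)$ via the vectors $\Xi_c$ — yields the asymmetric $1/\sinh(\alpha(q))\, Z_\alpha\otimes X^{-,\ri}_\alpha$ term. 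A key simplification I would exploit is that $\sinh(\wad_Q)^{-1}$ and $\coth(\wad_Q)$ act diagonally on the root vectors, with eigenvalues $1/\sinh(\alpha(q))$ and $\coth(\alpha(q))$ on the $\alpha$-weight space, so that the operator-valued expressions in the Lemmas become scalar coefficients once the basis is fixed.

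The main obstacle will be this last reorganization: keeping track of the precise normalization constants (the $\sqrt 2$'s in (\ref{Z_alpha}), the $\ri$/$\RE$ conventions in (\ref{inner_product_on_T_E_S}) and (\ref{tau_derivative}), the factor $-\ri/4$ in Lemma \ref{lemma_m_derivative}, and the overall factor $2$ from (\ref{PB_S_vs_PB_red})) so that the $\Xi_c$-contributions collapse exactly into the $Z_\alpha$ term and the unwanted $EE^*$ trace-subtraction pieces cancel. Once the coefficient matching is verified on each weight space — a routine but lengthy check — formula (\ref{r_matrix}) follows, and plugging it into (\ref{r}) then into (\ref{L_tensorial_PB}) completes the proof; I would also note in passing that the stated $r$-matrix depends only on $q$, which is consistent because none of the derivative formulas above involve $p$ except through $L$ itself, which already appears explicitly on the right-hand side of (\ref{L_tensorial_PB}).
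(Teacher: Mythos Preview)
Your approach is essentially the paper's: evaluate (\ref{r}) term by term at $u_0$ using Lemmas \ref{lemma_sigma_derivative} and \ref{lemma_m_derivative}, then reorganize in the root basis. Two places where the paper is sharper than your sketch deserve mention.

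First, $\{\varphi_1,\varphi_2\}^\ext(u_0)$ actually \emph{vanishes}. Since neither $\tilde{\sigma}_R$ nor $\tilde{m}$ depends on the $\mfg$-factor, the first three terms of (\ref{PB_ext}) drop out and only the $\omega^\cO$-contribution $\{\tilde{m}_c,\tilde{m}_d\}^\ext(u_0)$ survives. The paper disposes of this by writing $\Xi_c=\mu^{-1}[V_c,\xi(E)]$ for explicit $V_c\in\mfk$ (see (\ref{A_c_V_c})) and checking $\langle\xi(E),[V_c,V_d]\rangle=0$. So the commutator term $[\{\varphi_1,\varphi_2\}^\ext(u_0),L_2]$ you plan to expand is identically zero, not merely something that eventually simplifies; without this observation you would be chasing a nonexistent contribution.

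Second, the $E E^*$ trace-subtraction pieces of $\Xi_c$ do not ``cancel'' among themselves. They assemble into a term of the form $\ri\bsone_N\otimes\Psi(q)$ for some $\Psi(q)\in\mfg$ (the paper's (\ref{Psi})), and this is discarded not by cancellation but because $\ri\bsone_N$ is central, so $[\ri\bsone_N\otimes\Psi(q),L_1]=0$ and the term contributes nothing to (\ref{L_tensorial_PB}). Your phrasing suggests you expect an internal cancellation, which is not what happens.
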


\begin{proof}
As dictated by (\ref{r}), we inspect the formula of $r_{12}(q, p)$
term-by-term. Using the basis given in the appendix, it is
clear that the quadratic Casimir (\ref{Casimir}) takes the form
\be
	\Omega_{12}
	= \sum_c (D^-_c \otimes D^-_c - D^+_c \otimes D^+_c)
	+ \sum_{\alpha, \epsilon} 
		(X^{-, \epsilon}_\alpha \otimes X^{-, \epsilon}_\alpha
		-X^{+, \epsilon}_\alpha \otimes X^{+, \epsilon}_\alpha).
\ee
Recalling (\ref{PB_ext}) and (\ref{varphi}), it is also clear that
\be
	\{ \varphi_1, \bsY_2 \}^\ext (u_0) 
	= \sum_A
		\left(
			(\ddd \sigma_R)_{e^Q} (e^Q T^A) 
			+ (\ddd m)_{(e^Q, \xi(E))} (e^Q T^A \oplus 0)
		\right)
		\otimes T_A.
\label{varphi_Y}
\ee
Now, from Lemma \ref{lemma_sigma_derivative} it is immediate that
\be
	\sum_A (\ddd \sigma_R)_{e^Q} (e^Q T^A) \otimes T_A
	= \sum_{c} D^+_c \otimes D^+_c
		+ \sum_{\alpha, \epsilon}
			X^{+, \epsilon}_\alpha \otimes 
			\left(
				X^{+, \epsilon}_\alpha 
				- \coth(\alpha(q)) X^{-, \epsilon}_\alpha
			\right),
\label{varphi_Y_1}
\ee
whereas Lemma \ref{lemma_m_derivative} leads to the formula
\be
	\sum_A (\ddd m)_{(e^Q, \xi(E))} (e^Q T^A \oplus 0) \otimes T_A 
	= - \frac{\sqrt{2}}{4} \sum_A \sum_{c}
		\left\langle 
			\Xi_c, \sinh(\wad_Q)^{-1} (T^A)_{\mfa^\perp} 
		\right\rangle
		D^+_{c} \otimes T_A. 
\ee
Remembering definition (\ref{Xi_c}), notice that the $c$-dependent part 
of $\Xi_c$ can be rewritten as
\be
\ri (F_c E^* + E F_c^*)
= 2 \sqrt{2} (D^+_c + X^{+, \ri}_{2 e_c})
	+ 2 \sum_{d = 1}^{c - 1} 
		(X^{+, \ri}_{e_d - e_c} + X^{+, \ri}_{e_d + e_c})  
	+ 2 \sum_{d = c + 1}^{n} 
		(X^{+, \ri}_{e_c - e_d} + X^{+, \ri}_{e_c + e_d}).  
\ee 
Therefore, upon introducing the Lie algebra element
\be
	\Psi(q) 
	= \frac{1}{N} \sum_{A} 
		\left \langle
			\ri E E^*, \sinh(\wad_Q)^{-1} (T^A)_{\mfa^\perp}
		\right \rangle
		T_A \in \mfg,
\label{Psi}
\ee
we find easily that
\be
	\sum_A (\ddd m)_{(e^Q, \xi(E))} 
		(e^Q T^A \oplus 0) \otimes T_A  
	= \ri \bsone_N \otimes \Psi(q) 
	+ \sum_\alpha \frac{1}{\sinh(\alpha(q))} 
		Z_\alpha \otimes X^{-, \ri}_\alpha.	
\label{varphi_Y_2}
\ee
By plugging formulae (\ref{varphi_Y_1}) and (\ref{varphi_Y_2}) into 
(\ref{varphi_Y}), the control over 
$\{ \varphi_1, \bsY_2 \}^\ext(u_0)$ is complete.

To proceed further, let us introduce the locally defined smooth functions
\be
	\tilde{\sigma}_R (y, Y, \rho) = \sigma_R(y)
	\quad \text{and} \quad 
	\tilde{m} (y, Y, \rho) = m(y, \rho)
	\quad
	((y, Y, \rho) \in \hat{\cP}^\ext). 
\ee
Due to (\ref{varphi}) it is clear that 
$\varphi = \tilde{\sigma}_R \tilde{m}$. 
Now, from (\ref{PB_ext}) we see that on $\hat{\cP}^\ext$ we have
\be
	\{ (\tilde{\sigma}_R)_1, (\tilde{\sigma}_R)_2 \}^\ext \equiv 0,
	\quad
	\{ (\tilde{\sigma}_R)_1, (\tilde{m})_2 \}^\ext \equiv 0,
	\quad
	\{ (\tilde{m})_1, (\tilde{\sigma}_R)_2 \}^\ext \equiv 0,
\ee
therefore 
$\{ \varphi_1, \varphi_2 \}^\ext(u_0) 
= \{ (\tilde{m})_1, (\tilde{m})_2 \}^\ext(u_0)$
readily follows. Keeping our focus on this relationship, from 
(\ref{gradient_def}) and Lemma \ref{lemma_m_derivative} it is
immediate that 
\be
	\nabla^\cO \RE(\tilde{m}_c) (u_0) = 0
	\quad \text{and} \quad
	\nabla^\cO \IM(\tilde{m}_c) (u_0) = - \Xi_c / 4.
\ee
Note that the Lie algebra element $\Xi_c$ (\ref{Xi_c}) can be represented 
as an appropriate commutator. Indeed, upon introducing the matrices
\be
	A_c = \frac{1}{n} \sum_{d = 1}^n (e_{c, d} - e_{d, c}) \in \mfu(n)
	\quad \text{and} \quad
	V_c = \diag(A_c, A_c) \in \mfk,
\label{A_c_V_c}
\ee
we find immediately that $\Xi_c = \mu^{-1} [V_c, \xi(E)]$.
Therefore, from (\ref{PB_ext}) and (\ref{omega_cO}) we obtain that
\be
\begin{split}
	\{ \tilde{m}_c, \tilde{m}_d \}^\ext(u_0)
	= -\frac{1}{16 \mu^2} 
		\left \langle \xi(E), [V_c, V_d] \right \rangle 
	= 0, 
\end{split}
\ee
for all $c, d \in \bN_n$. Thus, we end up with the simple relationship
$\{ \varphi_1, \varphi_2 \}^\ext (u_0) = 0$. 	

We conclude the proof with the observation that the term 
$\ri \bsone_N \otimes \Psi(q)$ appearing in (\ref{varphi_Y_2}) can be 
neglected, since it commutes with $L_1(q, p)$.
Therefore, by simply plugging the above derived formulae into (\ref{r}),
the theorem follows.
\end{proof}

Switching to the standard basis $\{ e_{k, l} \}$ of the matrix Lie
algebra $\mfgl(N, \bC)$, the $r$-matrix (\ref{r_matrix}) can be rewritten 
as
\be
\begin{split}
	r_{12}(q)
	= & \, \sum_{\substack{a, b = 1 \\ (a \neq b)}}^n
			\coth(q_a - q_b) 
			(e_{a, b} + e_{n + a, n + b}) 
			\otimes 
			(e_{b, a} - e_{n + b, n + a}) \\
	& + \sum_{a, b = 1}^n
		\coth(q_a + q_b) 
		(e_{a, n + b} + e_{n + a, b}) 
		\otimes 
		(e_{n + b, a} - e_{b, n + a}) \\
	& + \half \sum_{\substack{a, b = 1 \\ (a \neq b)}}^n
			\frac{1}{\sinh(q_a - q_b)} 
			(e_{a, a} + e_{n + a, n + a} 
				+ e_{b, b} + e_{n + b, n + b}) 
			\otimes 
			(e_{a, b} - e_{n + a, n + b}) \\
	& - \half \sum_{a, b = 1}^n
			\frac{1}{\sinh(q_a + q_b)} 
			(e_{a, a} + e_{n + a, n + a} 
				+ e_{b, b} + e_{n + b, n + b}) 
			\otimes 
			(e_{a, n + b} - e_{n + a, b}) \\
	& + \sum_{a, b = 1}^n 
		\left(
			e_{a, b} \otimes e_{n + b, n + a}
			+ e_{n + a, n + b} \otimes e_{b, a}
			+ e_{a, n + b} \otimes e_{b, n + a}
			+ e_{n + a, b} \otimes e_{n + b, a}
		\right).
\end{split}
\label{r_in_standard_basis}
\ee
To conclude this subsection, notice that the above $r$-matrix can be 
seen as a generalization of the $C_n$-type $r$-matrix constructed by 
Avan, Babelon and Talon. Indeed, up to a constant conjugation, 
the $q$-dependent part of (\ref{r_in_standard_basis}) can be identified 
with the $r$-matrix of the $C_n$ Sutherland model presented in 
\cite{Avan_Babelon_Talon_1994}. Nevertheless, as one can easily verify 
by inspecting the $r$-matrix Poisson bracket (\ref{L_tensorial_PB}), 
in the special case $\kappa = 0$ the $q$-independent part of 
(\ref{r_in_standard_basis}) can be safely neglected. In other words, 
with the specialization $\kappa = 0$ we can also recover the 
$C_n$-type $r$-matrix of paper \cite{Avan_Babelon_Talon_1994}. 

\subsection{Lax representation of the dynamics}
Having constructed an $r$-matrix for the $BC_n$ Sutherland model, we can
automatically provide a Lax representation for the dynamics as well. For 
this, we need the operator version of $r_{12}(q)$ (\ref{r_matrix}),
which is defined via the natural identifications 
\be
	\mfg \otimes \mfg \cong \mfg \otimes \mfg^* \cong \End(\mfg).
\ee 
More precisely, the linear operator $R(q) \in \End(\mfg)$ corresponding 
to the element $r_{12}(q) \in \mfg \otimes \mfg$ can be recovered from 
the formula
\be
	R(q) Y = \tr_2 (r_{12}(q) Y_2)
	\quad
	(Y \in \mfg),
\label{R_operator}
\ee
where the linear operator $\tr_2$ defined by 
$\tr_2(X \otimes Y) = \tr(Y) X$ is the usual partial trace on the second 
factor. From (\ref{r_matrix}) and (\ref{R_operator}) it is immediate that 
\be
	R(q) Y 
	= 2 \sum_{\alpha, \epsilon} \coth(\alpha(q))
			\langle X^{-, \epsilon}_\alpha, Y \rangle 
			X^{+, \epsilon}_\alpha
	- 2 \sum_{\alpha} \frac{1}{\sinh(\alpha(q))} 
			\langle X^{-, \ri}_\alpha, Y \rangle Z_\alpha
	- Y^*. 
\label{R}
\ee

Now, let us introduce the matrix-valued function
$B = \half \left( L + R L \right)$
defined on the phase space $\cP^S$. Since the Lax matrix
$L$ (\ref{L}) can be written as
\be
\begin{split}
	L(q, p) = & \sqrt{2} 
			\sum_{c = 1}^n p_c D^-_c 
				-\sqrt{2} \sum_{c = 1}^n
					\frac{\nu + \kappa \cosh(2 q_c)}
					{\sinh(2 q_c)}  
					X^{-, \ri}_{2 e_c} 
	\\	
		& -2 \mu \sum_{1 \leq a < b \leq n}
			\left(
				\frac{X^{-, \ri}_{e_a - e_b}}
				{\sinh(q_a - q_b)} 
				+ \frac{X^{-, \ri}_{e_a + e_b}}
				{\sinh(q_a + q_b)} 
			\right)
		-\kappa \ri \bsC,	
\end{split}
\label{L_in_ONB}
\ee
from (\ref{R}) it follows that $B$ has the block matrix structure
\be
	B = 
	\begin{bmatrix}
		S & T \\
		T & S
	\end{bmatrix},
\label{B_block_matrix}
\ee
where $S$ and $T$ are appropriate $\mfu(n)$-valued functions 
on $\cP^S$. Namely, for their matrix entries we have
\be
	T_{c, c} = \ri \frac{\nu \cosh(2 q_c) + \kappa}{\sinh^2(2 q_c)},
	\quad
	T_{a, b} = \ri \mu \frac{\cosh(q_a + q_b)}{\sinh^2(q_a + q_b)},
	\quad
	S_{a, b} = - \ri \mu \frac{\cosh(q_a - q_b)}{\sinh^2(q_a - q_b)},
\ee
meanwhile
\be
	S_{c, c} 
	= \ri \frac{\nu + \kappa \cosh(2 q_c)}{\sinh^2(2 q_c)}
	+ \ri \mu \sum_{\substack{d = 1 \\ (d \neq c)}}^n 
					\left(
						\frac{1}{\sinh^2(q_c - q_d)}
						+ \frac{1}{\sinh^2(q_c + q_d)}
					\right),
\ee
where $a, b, c \in \bN_n$ and $a \neq b$. Notice that $B$ is actually 
a $\mfk$-valued map depending only on $q$.

As we have discussed in Section \ref{S2}, the reduced Hamiltonian 
corresponding to $F_2$ (\ref{F_2}) coincides 
with the Hamiltonian of the Sutherland model (\ref{H_S}), i.e.
$H^S = \langle L, L \rangle / 4$. 
By applying the Hamiltonian vector field $\bsX_{H^S} \in \mfX(\cP^S)$
on $L$, from the $r$-matrix Poisson bracket (\ref{L_tensorial_PB}) we 
obtain 
\be
	\bsX_{H^S} [L] 	= \half [R L, L] = [B, L].
\ee 
That is, along each trajectory $t \mapsto (q(t), p(t))$ of the
Sutherland dynamics the Lax equation
\be
	\dot{L} = [B, L]
\label{Lax_eqn}
\ee
holds. The above observation can be sharpened as follows.

\begin{THEOREM}
\label{theorem_Lax_eqn}
A smooth curve in the phase space $\cP^S$ (\ref{cP_S}) is an integral 
curve of the hyperbolic $BC_n$ Sutherland dynamics, if and only if, 
along the curve the Lax equation (\ref{Lax_eqn}) is satisfied.
\end{THEOREM}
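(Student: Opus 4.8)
The plan is to establish the two directions of the equivalence separately. The "only if" direction has in effect already been obtained above: along any trajectory of the Sutherland dynamics the Lax equation $\dot{L} = [B, L]$ holds, as a consequence of the $r$-matrix Poisson bracket (\ref{L_tensorial_PB}) applied with $H^S$. So the substance of the theorem is the converse: one must show that the Lax equation, as a system of ODEs for the matrix entries, \emph{implies} the Hamiltonian equations of motion $\dot q_c = p_c$ and $\dot p_c = -\partial_{q_c} H^S$. The natural strategy is to read off from $\dot L = [B, L]$ the scalar equations obtained by looking at individual matrix entries and comparing them with the corresponding Hamiltonian equations.

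First I would isolate the diagonal entries in the upper-left block. From the block structures of $L$ (via $L_\mfp$ and (\ref{A_and_B})) and of $B$ (\ref{B_block_matrix}), the $(c,c)$ entry of $\dot L = [B,L]$ reads $\dot p_c = ([B,L])_{c,c}$. One then computes the right-hand side: $([B,L])_{c,c} = \sum_k (B_{c,k} L_{k,c} - L_{c,k} B_{k,c})$, splitting the sum into the $\cA$-part (indices $1,\dots,n$) and the $\cB$-part (indices $n+1,\dots,2n$). Using the explicit entries of $\cA$, $\cB$, $S$, $T$, this collapses — after the hyperbolic identity $\tfrac{1}{\sinh x}\cdot\tfrac{\cosh x}{\sinh^2 x} + (\text{reflected term}) = -\tfrac12 \partial_x \tfrac{1}{\sinh^2 x}$ type manipulation — to exactly $-\partial_{q_c} H^S$ with the coupling constants (\ref{coupling_relations}). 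The off-diagonal entries of $\dot L = [B,L]$, by contrast, must be shown to reduce to the single statement $\dot q_c = p_c$ together with identities that hold automatically; concretely, $(\dot L)_{a,b}$ for $a\neq b$ involves $\dot q_a - \dot q_b$ through $\partial_t \operatorname{csch}(q_a - q_b)$, and matching against $([B,L])_{a,b}$ forces $\dot q_a - \dot q_b = p_a - p_b$, while the analogous $(a, n+b)$ entry forces $\dot q_a + \dot q_b = p_a + p_b$; together these give $\dot q_c = p_c$ for every $c$. The remaining entries (those of $\kappa\ri\bsC$, which is constant) then give no new constraint, only consistency checks.

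The main obstacle I anticipate is bookkeeping rather than conceptual: one has to verify that the off-diagonal Lax equations do not over-determine the system — i.e. that once $\dot q_c = p_c$ and $\dot p_c = -\partial_{q_c}H^S$ are imposed, \emph{all} the remaining scalar components of $\dot L = [B,L]$ are satisfied identically. A clean way to organize this, and to avoid an entry-by-entry grind, is to note that the Lax equation and the Hamiltonian flow are both first-order ODE systems on the same finite-dimensional phase space $\cP^S$, and that we have already shown the Hamiltonian flow solves the Lax equation. Hence it suffices to show the Lax equation, viewed as a system, has a \emph{well-posed} reduction to the $2n$ unknowns $(q,p)$ — that is, the right-hand side $[B(q),L(q,p)]$ determines $\dot q$ and $\dot p$ uniquely and consistently — which is precisely the content of the calculation above for the diagonal blocks of the two $n\times n$ sectors. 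Since any integral curve of the Lax equation must then satisfy those same $2n$ first-order equations, uniqueness of solutions of ODEs closes the argument.

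Thus the proof reduces to: (i) expand $([B,L])_{c,c}$ in the upper-left block and identify it with $-\partial_{q_c} H^S$; (ii) expand $([B,L])_{a,b}$ and $([B,L])_{a,n+b}$ off the diagonal and extract $\dot q_c = p_c$; (iii) observe that the equations for the remaining matrix entries are automatically consistent, so that the Lax equation is equivalent to the full Hamiltonian system. Step (i) is where the coupling-constant identifications (\ref{coupling_relations}) and the precise form of the functions $S_{c,c}$, $T_{c,c}$ enter, and it is the step most prone to sign and factor-of-two slips, so I would carry it out with care.
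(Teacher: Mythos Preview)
Your proposal is correct and follows essentially the same route as the paper: compute $[B,L]$ and $\dot{L}$ explicitly and compare. The paper organizes the computation slightly more efficiently by working in the root-adapted basis $\{D^-_c, X^{-,\epsilon}_\alpha\}$ of the appendix: using the hyperbolic identity
\[
\frac{\cosh x}{\sinh^2 x}\,\frac{1}{\sinh y} - \frac{\cosh y}{\sinh^2 y}\,\frac{1}{\sinh x}
= \frac{1}{\sinh(x+y)}\left(\frac{1}{\sinh^2 x} - \frac{1}{\sinh^2 y}\right),
\]
it writes out $[B,L]$ and $\dot{L}$ \emph{in full} in this basis, observes that both lie in the span of $\{D^-_c\} \cup \{X^{-,\ri}_\alpha\}$, and reads off the equivalence coefficient by coefficient. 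This makes both directions simultaneous and automatically disposes of your over-determination worry --- there are no ``leftover'' entries to check for consistency, and no appeal to uniqueness of ODEs is needed. Your entry-by-entry version in the standard basis $\{e_{k,l}\}$ reaches the same conclusion, but the adapted basis makes the bookkeeping you flagged in step~(iii) disappear.
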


\begin{proof}
By applying repeatedly the identity
\be
	\frac{\cosh(x)}{\sinh^2(x)} \frac{1}{\sinh(y)}
	- \frac{\cosh(y)}{\sinh^2(y)} \frac{1}{\sinh(x)}
	= \frac{1}{\sinh(x + y)}
		\left(
			\frac{1}{\sinh^2(x)} - \frac{1}{\sinh^2(y)}
		\right),
\label{hyperbolic_identity}
\ee
elementary algebraic manipulations lead to the formula
\be
\begin{split}
	[B, L] 
	= & \, 2 \mu \sum_{1 \leq a < b \leq n}
		\left(
			(p_a - p_b) 
			\frac{\cosh(q_a - q_b)}{\sinh^2(q_a - q_b)} 
			X^{-, \ri}_{e_a - e_b}
			+ (p_a + p_b) 
			\frac{\cosh(q_a + q_b)}{\sinh^2(q_a + q_b)} 
			X^{-, \ri}_{e_a + e_b}
		\right) \\
	& + 2 \sqrt{2} \sum_{c = 1}^n p_c 
			\frac{\nu \cosh(2 q_c) + \kappa}{\sinh^2(2 q_c)} 
				X^{-, \ri}_{2 e_c} 
	- \sqrt{2} \sum_{c = 1}^n \frac{\partial H}{\partial q_c} D^-_c.
\end{split}
\label{B_L_commutator}
\ee
On the other hand, by differentiating $L$ (\ref{L_in_ONB}) 
along an arbitrary smooth curve $(q(t), p(t)) \in \cP^S$
with respect to time $t$, we find easily that
\be
\begin{split}
	\dot{L} 
	= & \, 2 \mu \sum_{1 \leq a < b \leq n}
		\left(
			(\dot{q}_a - \dot{q}_b) 
			\frac{\cosh(q_a - q_b)}{\sinh^2(q_a - q_b)} 
			X^{-, \ri}_{e_a - e_b}
			+ (\dot{q}_a + \dot{q}_b) 
			\frac{\cosh(q_a + q_b)}{\sinh^2(q_a + q_b)} 
			X^{-, \ri}_{e_a + e_b}
		\right) \\
	& + 2 \sqrt{2} \sum_{c = 1}^n \dot{q}_c 
			\frac{\nu \cosh(2 q_c) + \kappa}{\sinh^2(2 q_c)} 
				X^{-, \ri}_{2 e_c} 
	+ \sqrt{2} \sum_{c = 1}^n \dot{p}_c D^-_c.
\end{split}
\label{L_dot}
\ee
Hence, by comparing formulae (\ref{B_L_commutator}) and (\ref{L_dot}), 
we conclude that the Lax equation is equivalent to the Hamiltonian 
equation of motion of the Sutherland model.
\end{proof}

\section{Discussion}
\label{S4}
\setcounter{equation}{0}
Starting with the seminal paper \cite{Avan_Talon_1993},
a lot of effort has been devoted to explore the $r$-matrix structure
of the Calogero--Moser--Sutherland many-particle systems. In
this paper we contribute to this research area by constructing a 
dynamical $r$-matrix for the hyperbolic $BC_n$ Sutherland model with 
three independent parameters (\ref{H_S}). The outcome of our analysis 
is consistent with the results of \cite{Avan_Babelon_Talon_1994} on 
the $r$-matrix structure of the $C_n$ Sutherland model with two 
independent coupling constants. We wish to mention that the authors 
of paper \cite{Forger_Winterhalder_2002} have also constructed a 
dynamical $r$-matrix for a restricted class of the $BC_n$-type Sutherland 
models. More precisely, their results are valid under the same 
restriction on the coupling parameters that was sticked to these models
in the fundamental work of Olshanetsky and Perelomov \cite{OlshaPere76}.
Recall also that the $BC_n$-type $r$-matrix in 
\cite{Forger_Winterhalder_2002}
explicitly depends on the coupling parameters. Note, however, that our 
$r$-matrix (\ref{r_matrix}) is \emph{independent} of the coupling 
parameters, and so it is equally valid for the $B_n$, $C_n$ and $BC_n$ 
Sutherland models, too. This `universal' feature of (\ref{r_matrix}) 
naturally indicates a Yang--Baxter-type algebraic structure 
behind these models. We wish to investigate this important topic in  
future publications. A related open problem is to explore the 
relationship between our $r$-matrix and the $BC_n$-type Sutherland 
model with two types of particles (see e.g. \cite{Hashizume}, 
\cite{Ayadi_Feher_2011}). 

Similar questions arise in the context of the elliptic Calogero models,
too. We have a fairly complete understanding of the $r$-matrix
structure of Krichever's spectral parameter dependent Lax matrix 
\cite{Krichever_1980} for the $A_n$-type model (see \cite{Sklyanin_1994}, 
\cite{Braden_Suzuki_1994}). These elliptic $r$-matrices are dynamical
objects, depending on the particle coordinates. As is known, one can 
even construct a Lax matrix for the elliptic $A_n$-type model, which 
obeys an $r$-matrix Poisson bracket with Belavin's \cite{Belavin_1981} 
\emph{non-dynamical} elliptic $r$-matrix. (The details on the elliptic 
case can be found in \cite{Hou_Wang_1999},
whereas \cite{Feher_Pusztai_2000} contains an elementary account on the 
non-dynamical $r$-matrix structure of the degenerate $A_n$-type models.) 
However, for the $r$-matrix structure of the elliptic $BC_n$ Calogero 
model only partial results are known. Namely, in 
\cite{Forger_Winterhalder_2002} a dynamical $r$-matrix is constructed
for the elliptic $BC_n$ Calogero model with the aforementioned 
restriction on coupling constants. For this restricted class of $BC_n$ 
models the Lax representation with non-dynamical $r$-matrix has been
also investigated (see \cite{Forger_Winterhalder_2003}).
Nevertheless, to our knowledge, the $r$-matrix of the most general three
parameter dependent elliptic $BC_n$ Calogero model is still missing.
It also appears to be an interesting open problem to provide 
$r$-matrices for the universal Lax operators constructed in 
\cite{Bordner_Corrigan_Sasaki}.
An equally ambitious project would be to construct an $r$-matrix for 
Inozemtsev's \cite{Inozemtsev_1989} many-parameter dependent elliptic 
model, too. We hope that appropriate generalizations of our $r$-matrix 
(\ref{r_matrix}) may play a role in clarifying these issues. 

To conclude the paper, let us recall that
the Ruijsenaars--Schneider--van Diejen (RSvD) models 
(see e.g. \cite{RuijSchneider}, \cite{vanDiejen1994}) 
are natural generalizations of the Calogero--Moser--Sutherland (CMS) 
particle systems. The $r$-matrix structure of the $A_n$-type 
Ruijsenaars--Schneider models is well understood (for details on the 
elliptic models see e.g. \cite{Nijhoff_et_al_1996},
\cite{Hou_Wang_2000}), but for the generic non-$A_n$-type models even
the Lax representation of the dynamics is missing. Quite surprisingly,
the construction of a Lax matrix for the rational $BC_n$ 
RSvD model with three independent
coupling parameters was carried out only in the recent paper 
\cite{Pusztai_NPB2012}. Due to the dual reduction picture presented
in \cite{Pusztai_NPB2012}, we expect that the $r$-matrix structure
of the rational $BC_n$ RSvD model can be analyzed by the same techniques 
we outlined in Section \ref{S3}. As for the $A_n$-type systems, it has 
been observed that in some sense the CMS and the RSvD models can be 
characterized by the \emph{same} $r$-matrices (for details see 
\cite{Suris_1997}). It appears to be an interesting
question whether the dual reduction picture behind the CMS and
the RSvD models can provide a geometric explanation of this
remarkable phenomenon. We wish to come back to these problems in 
future publications.

\renewcommand{\thesection}{A}
\section{Convenient basis for $\mfu(n, n)$}
\label{appendix}
\renewcommand{\theequation}{A.\arabic{equation}}
\setcounter{equation}{0}
As a supplementary material to the main text, in this appendix we present
a convenient basis for the real Lie algebra $\mfg = \mfu(n, n)$ adapted 
to the orthogonal decomposition (\ref{mfg_refined_decomposition}).
First, for each $c \in \bN_n$ we define the linear functional
\be
	e_c \colon \bR^n \rightarrow \bR,
	\quad
	q = (q_1, \ldots, q_n) \mapsto e_c(q) = q_c.
\label{e_c}
\ee 
Clearly the set of functionals
\be
	\calR_+ 
	= \{ e_a \pm e_b \, | \, 1 \leq a < b \leq n \}
	\cup
	\{2 e_c \, | \, c \in \bN_n \}	
\label{calR_+}
\ee
can be seen as a family of \emph{positive} roots of type $C_n$. 
We also need the standard $N \times N$ elementary matrices $e_{k, l}$. 
Recall that for their matrix entries we have 
$(e_{k, l})_{k', l'} = \delta_{k, k'} \delta_{l, l'}$.

Now, for each $c \in \bN_n$ we define the diagonal matrices
\be
	D_c^+ = \frac{\ri}{\sqrt{2}} (e_{c, c} + e_{n + c, n + c}),
	\quad
	D_c^- = \frac{1}{\sqrt{2}} (e_{c, c} - e_{n + c, n + c}).
\label{D+-}
\ee 
Clearly $\{ D_c^+ \}$ is a basis in $\mfm$, whereas $\{ D_c^-\}$ is 
basis in $\mfa$, satisfying the relations
\be
	\langle D_c^+, D_d^+ \rangle = - \delta_{c, d},
	\quad
	\langle D_c^-, D_d^- \rangle = \delta_{c, d}.
\label{D_scalar_product}
\ee
Next, for each $c \in \bN_n$ we introduce the matrices
\be
	X_{2 e_c}^{\pm, \ri} 
	= -\frac{\ri}{\sqrt{2}} (e_{c, n + c} \pm e_{n + c, c}).
\ee
Also, for all $1 \leq a < b \leq n$ we define the following
matrices with purely real entries
\be
\begin{split}
	X_{e_a - e_b}^{\pm, \rr} &
	= \half (e_{a, b} \mp e_{b, a} 
				\pm e_{n + a, n + b} - e_{n + b, n + a}), \\
	\quad
	X_{e_a + e_b}^{\pm, \rr} &
	= -\half (e_{a, n + b} - e_{b, n + a} 
				\pm e_{n + a, b} \mp e_{n + b, a}), 
\end{split}
\ee
together with the following matrices with purely imaginary entries
\be
\begin{split}
	X_{e_a - e_b}^{\pm, \ri} &
	= \frac{\ri}{2} (e_{a, b} \pm e_{b, a} 
					\pm e_{n + a, n + b} + e_{n + b, n + a}), \\
	\quad
	X_{e_a + e_b}^{\pm, \ri} &
	= - \frac{\ri}{2} (e_{a, n + b} + e_{b, n + a} 
					\pm e_{n + a, b} \pm e_{n + b, a}). 
\end{split}
\ee
The set of vectors $\{ X_\alpha^{+, \epsilon} \}$ forms a basis in 
$\mfm^\perp$, meanwhile $\{ X_\alpha^{-, \epsilon} \}$ is a basis in 
$\mfa^\perp$. Note that
\be
	\langle X^{+, \epsilon}_\alpha, X^{+, \epsilon'}_{\alpha'} \rangle
	= - \delta_{\alpha, \alpha'} \delta_{\epsilon, \epsilon'},
	\quad 
	\langle X^{-, \epsilon}_\alpha, X^{-, \epsilon'}_{\alpha'} \rangle
	= \delta_{\alpha, \alpha'} \delta_{\epsilon, \epsilon'}.
\label{X_scalar_product}
\ee
Due to the orthogonality relations (\ref{D_scalar_product}) and 
(\ref{X_scalar_product}), the construction of the corresponding dual 
basis is trivial. Keeping in mind the notation introduced in (\ref{Q}), 
it is worth mentioning that the above listed vectors satisfy the 
commutation relations
\be
	[Q, X_\alpha^{\pm, \epsilon}] 
	= \alpha(q) X_\alpha^{\mp, \epsilon}, 
\ee
where $q \in \bR^n$, $\alpha \in \calR_+$ and 
$\epsilon \in \{ \rr, \ri \}$.

\medskip
\noindent
\textbf{Acknowledgments.}
We wish to thank L.~Feh\'er (University of Szeged) for useful comments
on the manuscript. This work has been supported in part by the Hungarian 
Scientific Research Fund (OTKA) under grant K 77400.


\end{document}